\documentclass{sig-alternate-05-2015}

\newif\iffullpaper
\fullpapertrue			

\iffullpaper
	\newcommand{\fullpaper}[1]{#1}
\else 
	\newcommand{\fullpaper}[1]{}
\fi

\usepackage{graphicx}
\usepackage[utf8]{inputenc}
\usepackage{url}
\usepackage{color,xcolor,colortbl}
\usepackage{algorithm,algorithmicx,algpseudocode}
\usepackage[mathscr]{euscript}		
\usepackage{helvet}
\usepackage{epstopdf}
\usepackage{nicefrac}
\usepackage{enumitem}

\newcommand{\etal}{et~al.}
\newcommand{\eg}{e.g.}
\newcommand{\ie}{i.e.}

\newcommand{\algFont}{\fontsize{10}{13}\selectfont}
\newcommand{\E}{\mathbf{\mathrm{E}}}
\algrenewcommand\textproc{\textsf}

\setlist[description]{listparindent=\parindent,leftmargin=0em,itemsep=1em,topsep=0.3em,font={\normalfont\sffamily\sfsize}}
\setlist[itemize]{topsep=0.3em, itemsep=0em, leftmargin=1.75em}

\newtheorem{theorem}{Theorem}
\newtheorem{lemma}{Lemma}


\algdef{SE}[DOWHILE]{Do}{DoWhile}{\algorithmicdo}[1]{\algorithmicwhile\ #1}
\algnewcommand\True{\textbf{true}}
\algnewcommand\False{\textbf{false}}

\algdef{SE}[SUBALG]{Indent}{EndIndent}{}{\algorithmicend\ }%
\algtext*{Indent}
\algtext*{EndIndent}

\newcommand{\bricks}{}
\def\bricks/{\mbox{TorBricks}}

\newcommand{\sfsize}{\fontsize{0.8\baselineskip}{0.68\baselineskip}\selectfont}
\newcommand{\sans}[1]{\textsf{\sfsize \mbox{#1}}}
\newcommand{\para}[1]{\vspace{0.55em} \noindent \sans{{\mbox{#1}}}}

\title{TorBricks: Blocking-Resistant Tor Bridge Distribution}

\numberofauthors{3}
\author{
	\alignauthor
	Mahdi Zamani\\[0.3em]
	\affaddr{Yale University}\\[0.1em]
	\affaddr{New Haven, CT}\\[0.3em]
	\email{mahdi.zamani@yale.edu}
	\alignauthor
	Jared Saia\\[0.3em]
	\affaddr{University of New Mexico}\\[0.1em]
	\affaddr{Albuquerque, NM}\\[0.3em]
	\email{saia@cs.unm.edu}	
	\and
	\alignauthor 
	Jedidiah Crandall\\[0.3em]
	\affaddr{University of New Mexico}\\[0.1em]
	\affaddr{Albuquerque, NM}\\[0.3em]
	\email{crandall@cs.unm.edu}
}

\begin{document}
\sloppy
\maketitle

\begin{abstract}
{
Tor is currently the most popular network for anonymous Internet access. It critically relies on volunteer nodes called \emph{bridges} for relaying Internet traffic when a user's ISP blocks connections to Tor. 
Unfortunately, current methods for distributing bridges are vulnerable to malicious users who obtain and block bridge addresses.
In this paper, we propose \bricks/, a protocol for distributing Tor bridges to $n$ users, even when an unknown number ${t < n}$ of these users are controlled by a malicious adversary. \bricks/ distributes $O(t\log{n})$ bridges and guarantees that all honest users can connect to Tor with high probability after $O(\log{t})$ rounds of communication with the distributor. 
\newline We also extend our algorithm to perform privacy-preserving bridge distribution when run among multiple untrusted distributors. This not only prevents the distributors from learning bridge addresses and bridge assignment information, but also provides resistance against malicious attacks from a $\lfloor m/3 \rfloor$ fraction of the distributors, where $m$ is the number of distributors.}
\end{abstract}

\maketitle

\section{Introduction}
Mass surveillance and censorship increasingly threaten democracy and freedom of speech. A growing number of governments around the world control the Internet to protect their domestic political, social, financial, and security interests~\cite{Turner:2016:Surveillance,Rushe:2012:Censorship}. Countering this trend is the rise of anonymous communication systems which strive to preserve the privacy of individuals in cyberspace. Tor~\cite{dingledine:2004} is the most popular of such systems with more than 2.5 million users on average per day~\cite{Tor:Users}. Tor relays Internet traffic via more than 6,500 volunteer nodes called \emph{relays} spread across the world~\cite{Tor:Relays}. By routing data through random paths in the network, Tor can protect the private information of its users such as identity and geographical location.


Since the list of all relays is available publicly, state-sponsored organizations can enforce Internet service providers (ISPs) to block access to all of them making Tor unavailable in territories ruled by the state. 
When access to the Tor network is blocked, Tor users have the option to use \emph{bridges}, which are volunteer relays not listed in Tor's public directory~\cite{Dingledine06designof}. Bridges serve only as entry points into the rest of the Tor network, and their addresses are carefully distributed to the users, with the hope that they cannot all be learned by censors. 
As of March 2016, about 3,000 bridge nodes were running daily in the Tor network~\cite{Tor:Bridges}.

Currently, bridges are distributed to users based on different strategies such as CAPTCHA-enabled email-based distribution and IP-based distribution~\cite{Dingledine06designof}. Unfortunately, censors are using sophisticated attacks to obtain
and block bridges, rendering Tor unavailable for many users~\cite{Dingledine:Bridges:2011,Ling:2012:infocom,BridgeBlockingChina:2012}.
Also, state of the art techniques for bridge distribution either (1) cannot provably guarantee that all honest users can access Tor~\cite{WangLBH:rBridge:13,McCoy:FC:2011,Sovran:2008:PSN}; (2) can only work when the number of corrupt users is known in advance~\cite{Mahdian:2010}; (3) require fully trusted distributors~\cite{McCoy:FC:2011,Mahdian:2010,Sovran:2008:PSN}; and/or (4) cannot resist malicious attacks from the distributors~\cite{WangLBH:rBridge:13,McCoy:FC:2011,Mahdian:2010,Sovran:2008:PSN}.

We believe that it is challenging in practice to identify or even estimate the number of corrupt users, due to the sophisticated nature of Internet censorship in many countries such as China~\cite{Oni:2012:China,Ensafi2015b}. Moreover, we believe it is desirable for a system to be robust to attacks on the distributors. For example, powerful adversaries can hack into these systems, not only to break users' anonymity by obtaining bridge assignment information, but also to prevent the bridge distribution protocol from achieving its goals.

\begin{figure*}[t]
	\centering
	\includegraphics[width=0.6\linewidth]{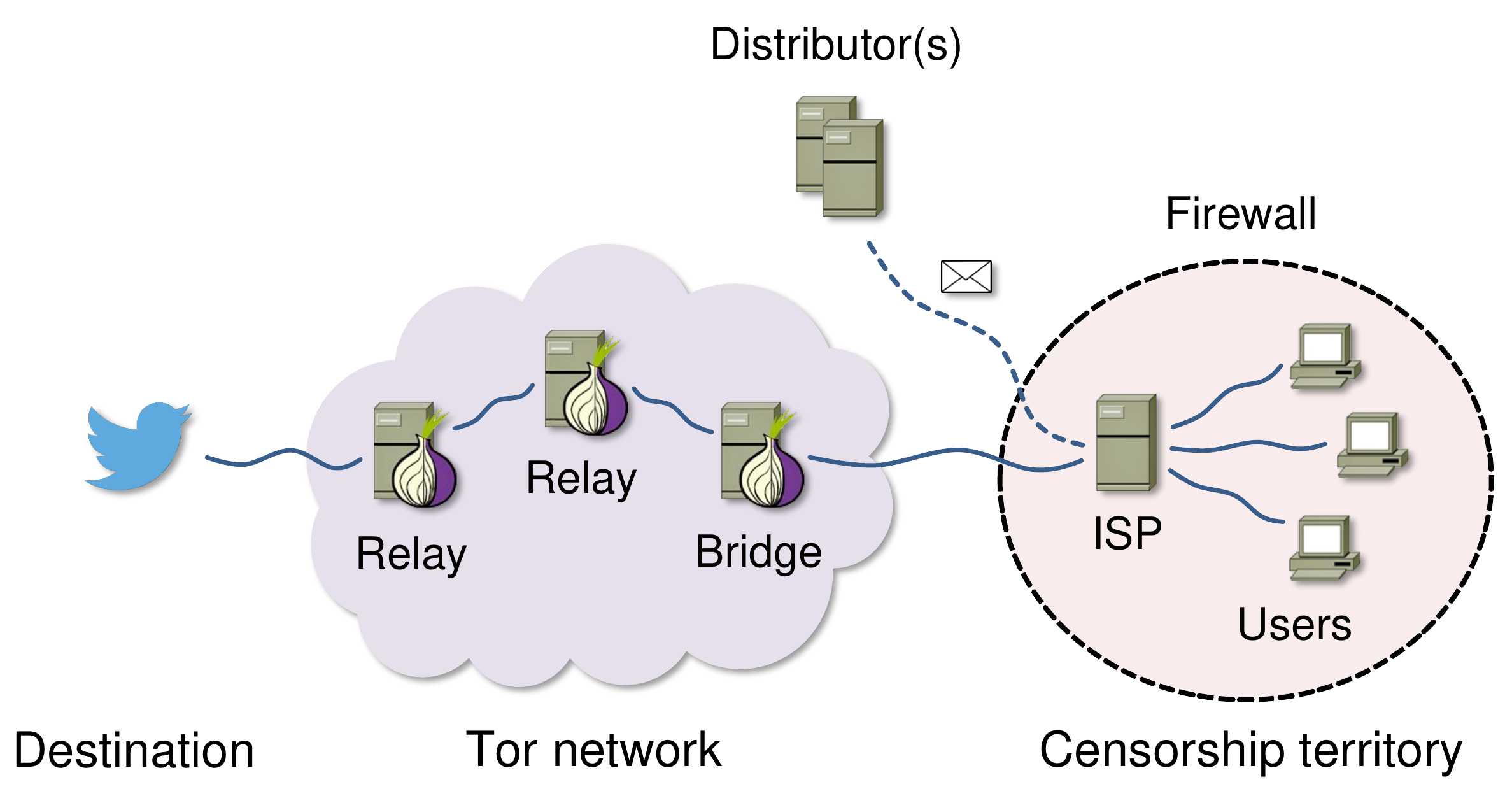}
	\caption{Our network model}
	\label{fig:model}
\end{figure*}

In this paper, we propose \bricks/, a bridge distribution algorithm that provably ensures Tor is available to all honest users with high probability, without requiring any \emph{a priori} knowledge about the number of corrupt users. \bricks/ guarantees that the number of rounds until all honest users can connect to Tor is bounded by $O(\log{t})$, where ${t<n}$ is the number of corrupt users. This is achieved by distributing at most $O(t\log{n})$ bridge addresses, where $n$ is the total number of users.

We also describe a privacy-preserving bridge distribution mechanism for the scenarios where a certain fraction of the distributors may be controlled maliciously by the adversary. 
We stress that \bricks/ can run independently from Tor so that the Tor network can focus on its main purpose of providing anonymity.

The rest of this paper is organized as follows. In Section~\ref{sec:model}, we describe our network and threat model. In Section~\ref{sec:results}, we state our main result as a theorem. We review related work in Section~\ref{sec:relatedwork}. In Section~\ref{sec:algorithm}, we describe our algorithms for reliable bridge distribution; we start from a simple algorithm and improve it as we continue. We describe our implementation of \bricks/ and our simulation results in Section~\ref{sec:simulations}. Finally, we summarize and state our open problems in Section~\ref{sec:conclusion}.

\subsection{Network and Threat Model} \label{sec:model}

In this section, we first define a basic model, where a single distributor performs the bridge distribution task, and then define a multiple distributors model, where a group of distributors collectively run our algorithm. Figure~\ref{fig:model} depicts our high-level network model.

\para{Basic Model.}
We assume there are $n$ \emph{users} (or \emph{clients}) who need to obtain bridge addresses to access Tor. Initially, we assume a single trusted server called the \emph{bridge distributor} (or simply the \emph{distributor}), which has access to a reliable supply of bridge addresses.

We assume an adversary (or \emph{censor}) who can view the internal state and control the actions of $t$ of the clients; we call these adversarially-controlled clients \emph{corrupt users}. \fullpaper{The adversary can corrupt clients probably by hacking into their computers, eavesdropping their communication, or introducing colluding nodes to the network.}
The adversary is \emph{adaptive} meaning that it can corrupt users at any point of the algorithm, up to the point of taking over $t$ users.

The corrupt users have the ability to \emph{block} bridges whose IP addresses they receive. A bridge that is blocked cannot be used by any user. When a bridge is blocked, we assume that the distributor is aware of this fact.\footnote{This can be done using a bridge reachability mechanism discussed in Section~\ref{sec:relatedwork}.}
The adversary does not have to block a bridge as soon as it finds its address; he is allowed to strategically (perhaps by colluding with other corrupt users) decide when to block a bridge. 
We refer to the other ${n-t}$ users as \emph{honest users}. Each honest user wants to obtain a list of bridge addresses, at least one of which is not blocked and hence can be used to connect to Tor. 
We further assume that the adversary has no knowledge of the private random bits used by our algorithm.

We make the standard assumption that there exists a rate-limited channel such as email that the users can use to send their requests for bridges to the distributor, but which is not suitable for interactive Internet communication such as web surfing.\footnote{Completely blocking a service such as email would usually impose major economic and political consequences for censors.} The distributor runs our bridge distribution algorithm locally and sends bridge assignments back to the users via the same channel. \fullpaper{We imagine that all communications over this channel are pseudonymous meaning that they do not reveal any information about the actual identities (i.e., IP addresses) of the sender and the recipient to the either sides of the communication.}

\para{Multiple Distributors Model.} 
We also consider a \emph{multiple distributors model}, where a group of \mbox{$m \ll n$} distributors collectively distribute bridge addresses among the users such that none of the distributors can learn any information about the user-bridge assignments. We assume that the distributors are connected to each other pairwise via a synchronous network with reliable and authenticated channels.

In this model, the adversary not only can corrupt an unknown number of the users, $t$, but can also maliciously control and read the internal state of up to $\lfloor m/3 \rfloor$ of the distributors. The corrupt distributors can deviate from our protocols in any arbitrary manner, \eg, by sending invalid messages or remaining silent. 
We assume that the bridges also use the rate-limited channel for communicating with the distributors with the purpose of registering themselves in the system. Figure~\ref{fig:multidist} shows our multiple distributors model.

\subsection{Our Result} \label{sec:results}
\noindent Below is our main theorem, which we prove in Section~\ref{sec:algorithm}.
\begin{theorem}
	\label{thm:main} There exists a bridge distribution protocol that can run among $m$ distributors and guarantee the following properties with probability ${1 - 1/n^\kappa}$, for some constant ${\kappa \geq 1}$, in the presence of a malicious adversary corrupting at most $\lfloor m/3 \rfloor$ of the distributors: 
	\begin{enumerate}[itemsep=0.4em,topsep=0.5em]
		\item All honest users can connect to Tor after ${\lceil \log{\lceil (t+1)/32 \rceil} \rceil + 1}$ rounds of communication with the distributors;
		\item The total number of bridges required is $O(t\log{n})$;
		\item Each user receives $m$ messages in each round;
		\item Each distributor sends/receives $O(m^2 + n)$ messages;
		\item Each message has length $O(\log{n})$ bits.
	\end{enumerate}
\end{theorem}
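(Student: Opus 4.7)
My plan is to prove the theorem in two stages. First I would design and analyze a simple single-trusted-distributor algorithm that achieves the round and bridge-count bounds of items (1) and (2), and then lift this algorithm into the $m$-distributor setting of items (3)--(5) using standard Byzantine secure-computation primitives.

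\textbf{Stage 1: single-distributor algorithm.} I would have the distributor maintain a partition of the $n$ users into groups and, in each round, assign a freshly-drawn bridge to every group whose previous bridge has been blocked (or to every group initially). On learning from the reachability mechanism that a bridge has been blocked, the distributor halves the responsible group (using private randomness the adversary cannot predict) and assigns fresh bridges to each half. The key invariant is that a block forces at least one corrupt user to remain in one of the two halves, so after $\lceil \log \lceil (t+1)/32 \rceil \rceil + 1$ rounds every surviving ``bad'' group is small enough (a constant, explaining the factor $32$) that further blocking is no longer profitable for the adversary. A charging argument then ties the total bridge count to $t$: every group that ever contains a corrupt user lies on an ancestor path in the splitting tree of depth at most $O(\log n)$, each such path consumes $O(1)$ bridges per level, and there are at most $O(t)$ such leaves, giving the $O(t \log n)$ bound in item (2).

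\textbf{Stage 2: lift to $m$ distributors.} The distributors will jointly simulate the single-distributor algorithm while keeping bridge addresses and user-bridge assignments hidden, using verifiable $(m,\lfloor m/3\rfloor+1)$ Shamir secret sharing. Each incoming bridge registers by secret-sharing its address; group membership bits and split decisions are maintained as shares and updated via BGW-style secure arithmetic; Byzantine agreement is invoked whenever the distributors must converge on public state such as the set of freshly-blocked bridges. When a user requests its round-$r$ assignment, each of the $m$ distributors transmits one share, giving item (3), and the user reconstructs locally. Standard $\lfloor m/3 \rfloor$-threshold robustness of VSS, Byzantine agreement, and BGW then delivers both privacy (the adversary's view is a function of at most $\lfloor m/3\rfloor$ shares, which is information-theoretically independent of the secret) and correctness (an honest supermajority always reconstructs the correct output). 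Per distributor, the cost is $O(m^2)$ pairwise messages for the MPC / agreement instances amortized across one round plus $O(n)$ messages to the users, matching item (4); because each share is a field element of size $O(\log n)$, item (5) follows.

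\textbf{Main obstacle.} The hardest part is making the charging argument in Stage 1 watertight against an \emph{adaptive} adversary that may strategically delay blocking. I expect to need a potential-function argument tracking a ``splitting credit'' of each corrupt-containing group and showing that every block strictly expends credit, combined with a concentration bound on the private randomness used to decide which user lands in which half of a split, so that with probability $1 - 1/n^\kappa$ no honest user is repeatedly co-located with corrupt users beyond the predicted depth. A secondary subtlety, which I plan to address by letting the algorithm react only to the observed block history rather than to any assumed parameter, is showing that the bounds hold uniformly in the unknown realized value of $t$. Once Stage 1 is established, Stage 2 is essentially a mechanical application of off-the-shelf MPC with the parameter counts bookkept round by round.
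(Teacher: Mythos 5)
Your Stage 2 is broadly the right shape: the paper likewise has bridges Shamir-share their addresses among the distributors, has each distributor send one share per user per round (item 3), and invokes Byzantine agreement on the public state (it uses a commit-and-reveal distributed coin plus Castro--Liskov BFT rather than general BGW). But your Stage 1 algorithm is not the paper's, and it cannot meet item (1). A group-splitting scheme lets a given honest user connect only once that user has been separated from every corrupt user by the random halvings. Starting from groups of size $\Theta(n)$, a fixed honest user remains co-located with a fixed corrupt user after $k$ halvings with probability about $2^{-k}$, so separating all $n-t$ honest users from all $t$ corrupt users with probability $1-1/n^{\kappa}$ requires $k=\Omega(\log n)$ halvings, i.e., $\Omega(\log n)$ rounds, since the adversary triggers at most one split of any group per round. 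For $t=1$ the theorem promises a single round, whereas your scheme leaves the $\Theta(n)$ honest users sharing the corrupt user's group thirsty for $\Theta(\log n)$ rounds; no choice of the terminal ``constant group size'' repairs this, because as long as a group contains one corrupt and one honest user, blocking costs the adversary nothing and still denies service. The paper's mechanism is different in kind: in round $i$ it distributes $d_i=2^{i+4}$ bridges, each user independently receiving a uniformly random one in each of $3\log n$ parallel instances, and it advances to round $i+1$ only if more than $0.6\,d_i$ bridges get blocked. Since each corrupt user holds exactly one bridge per instance and so can block at most one, forcing round $i+1$ requires at least $0.6\cdot 2^{i+4}$ distinct corrupt users, and the adversary's budget runs out after $\lceil\log\lceil(t+1)/32\rceil\rceil+1$ rounds; within a round, a Doob-martingale/Azuma argument bounds the probability that a fixed user draws a bad bridge in a fixed instance by $0.6$, which the $3\log n$ instances amplify to $1/n^{2}$ per user. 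Your charging argument does recover the $O(t\log n)$ bridge count, but the round bound is the load-bearing claim and your algorithm provably misses it.

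A secondary problem sits in your Stage 2 accounting: maintaining per-user membership bits and split decisions ``via BGW-style secure arithmetic'' costs $\Omega(m^{2})$ messages per secure multiplication, hence $\Omega(nm^{2})$ per round over $n$ users, which violates the $O(m^{2}+n)$ bound of item (4). The paper avoids any heavy MPC: the only jointly produced randomness is one field element per round (commit-and-reveal plus a single BFT instance, $O(m^{2})$ messages), from which every distributor locally and deterministically derives the index assignments; only the bridge addresses themselves are ever secret-shared, and each distributor's remaining traffic is the $O(n)$ shares it forwards to users.
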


We also implement a prototype of \bricks/ and conduct simulations to measure the running time and bridge cost of the protocol. We discuss our experimental results in Section~\ref{sec:simulations}.

\begin{figure}[t]
	\centering
	\includegraphics[width=0.7\linewidth]{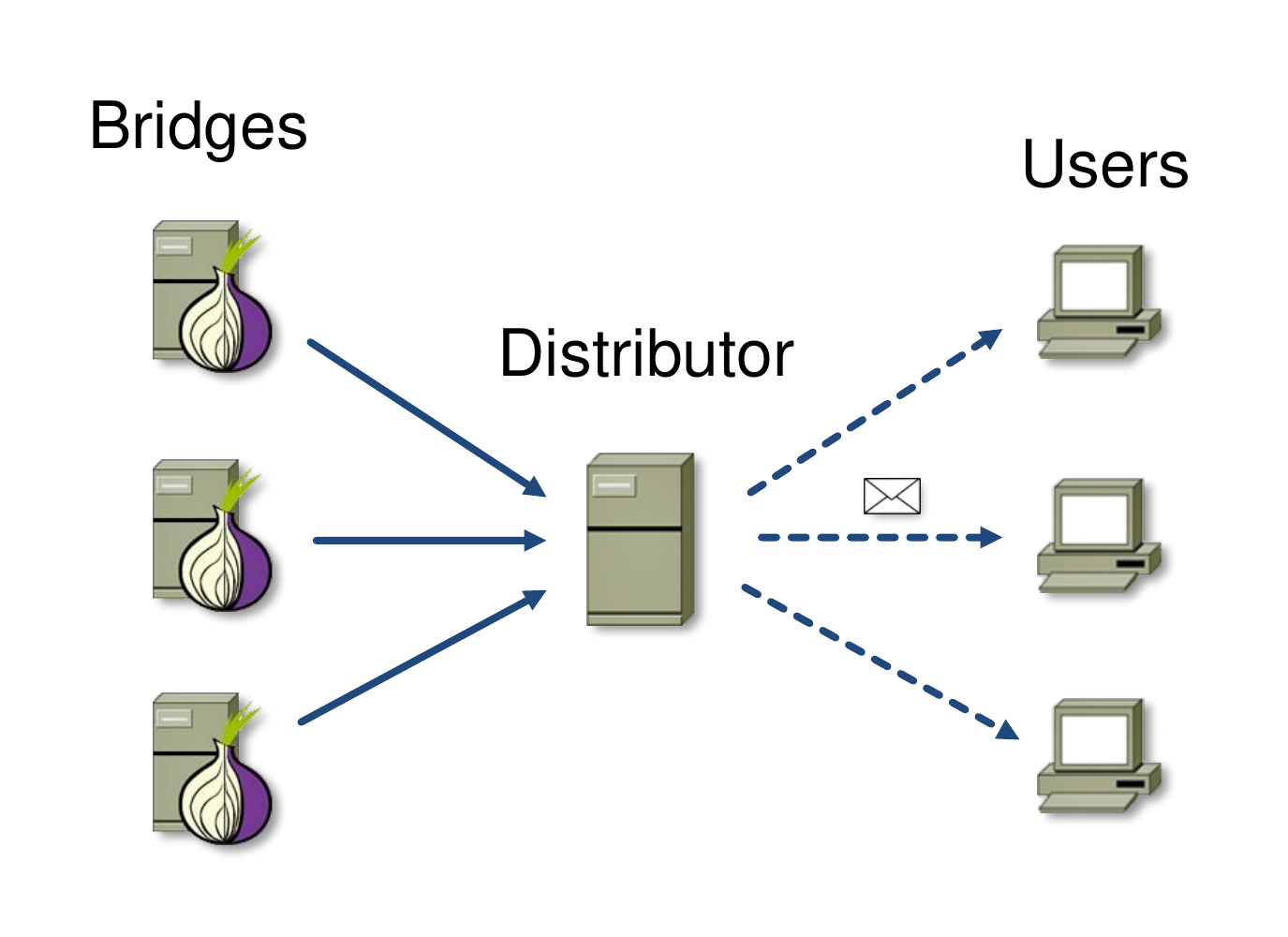}
	\caption{Single distributor model}
	\label{fig:singledist}
\end{figure}


\subsection{Technical Challenges} 
The key technical novelties in the design of \bricks/ are as follows:
\begin{enumerate}[leftmargin=1.7em, itemsep=0.7em, topsep=0.6em]
	\item \sans{Resource-Competitive Costs.} Our protocol adaptively increases the number of bridges distributed among the users with respect to the number of bridges recently blocked by the adversary. This reduces the number of bridges used by the algorithm at the expense of a small (logarithmic) latency cost, which is also a function of the adversary's cost. As a result, the overhead of \bricks/ will always be proportional to the amount of corruption by the adversary. The details are described in Section~\ref{sec:basic-alg}.  
	
	\item \sans{Handling Client Churn.} In practice, users join and leave the system frequently. Adding new users to the system while offering provable robustness against an unknown number of corrupt users is challenging. This is because either (1) the adversary can cause denial of service to the new users if the algorithm's ``next move'' is based on the adversary's behavior; or (2) the protocol cannot guarantee every user receives a usable bridge. We propose a simple technique to handle this with very small (constant) latency overhead. The details are described in Section~\ref{sec:churn}.	
	
	\item \sans{Oblivious Bridge Distribution.} Our technique for computing user-bridge assignments does not depend on actual bridge addresses. Thus, \fullpaper{the distributor does not need to know which bridges are being assigned to which users;} the distributor can \fullpaper{rather} assign ``bridge pseudonyms'' to the users. This prevents an honest-but-curious distributor from snooping on the user-bridge assignments.
	We also show how to distribute these pseudonyms among a group of geographically-dispersed servers who can collectively give the users the information needed to reconstruct their bridge addresses. This protects the anonymity of the users against colluding distributors. The details are described in Sections~\ref{sec:leader-alg} and \ref{sec:decentralized-alg}.
	
	
	\item \sans{Distributed Random Generation.} We show how a \emph{distributed random generation (DRG)} protocol can be used to solve the bridge distribution problem efficiently among multiple untrusted distributors. A DRG protocol allows a group of nodes to collectively generate a uniform random number in a way that none of the nodes can learn it before others, or bias it from the uniform distribution. We employ a well known DRG protocol in \bricks/ to provide the first bridge distribution mechanism that can resist malicious (active) attacks from a subset of the distributors. The details are described in Section~\ref{sec:decentralized-alg}.
\end{enumerate}

\section{Related Work} \label{sec:relatedwork}
\para{Proxy Distribution.} The bridge distribution problem has been studied as the \emph{proxy distribution}, where a set of proxy servers outside a censorship territory are distributed among a set of users inside the territory. These proxies are used to relay Internet traffic to blocked websites. 

\begin{figure}[t]
	\centering
	\includegraphics[width=0.7\linewidth]{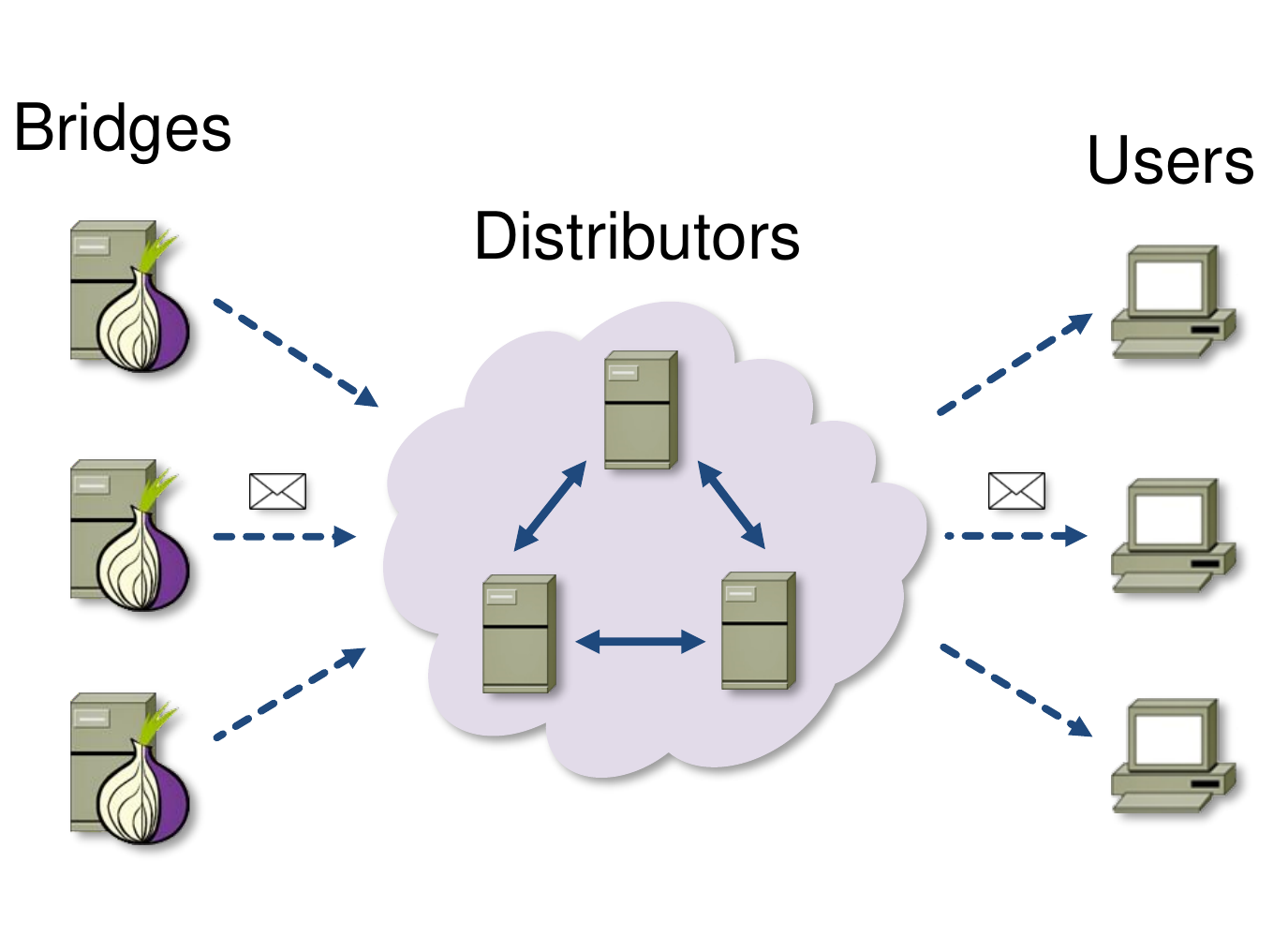}
	\caption{Multiple distributors model}
	\label{fig:multidist}
\end{figure}

Feamster~\etal~\cite{Feamster:PETS:2003} propose a proxy distribution algorithm that requires every user to solve a cryptographic puzzle to discover a proxy. In this way, the algorithm prevents corrupt users from learning a large number of proxies. Unfortunately, empirical results of~\cite{Feamster:PETS:2003} show that a computationally powerful censor can easily block a very large fraction of the proxies.

The Kaleidoscope system of Sovran~\etal~\cite{Sovran:2008:PSN} disseminates proxy addresses over a social network whose links correspond to existing real world social relationships among users. Unfortunately, this algorithm assumes the existence of a few internal trusted users who can relay other users' traffic. In addition, Kaleidoscope cannot guarantee its users' access to Tor.

McCoy~\etal~\cite{McCoy:FC:2011} propose Proximax; a proxy distribution system that uses social networks such as Facebook as trust networks that can provide a degree of protection against discovery by censors. Proximax estimates each user's effectiveness, and chooses the most effective users for advertising proxies, with the goals of maximizing the usage of these proxies while minimizing the risk of having them blocked.

Mahdian~\cite{Mahdian:2010} studies the proxy distribution problem when the number of corrupt users, $t$, is known in advance. He proposes algorithms for both large and small values of $t$ and provides a lower bound for dynamic proxy distribution that is useful only when ${t \ll n}$.
Unfortunately, it is usually hard in practice to reliably estimate the value of $t$. Mahdian's algorithm for large known $t$ requires at most ${t\left(1 + \lceil \log{(n/t)} \rceil \right)}$ bridges, and his algorithm for small known $t$ uses ${O(t^2 \log{n} / \log{\log{n}})}$ bridges.

Wang~\etal~\cite{WangLBH:rBridge:13} propose a reputation-based bridge distribution mechanism called rBridge that computes every user's reputation based on the uptime of its assigned bridges and allows the user to replace a blocked bridge by paying some reputation credits. Interestingly, rBridge is the first model to provide user privacy against an honest-but-curious distributor. This is achieved by performing oblivious transfer between the distributor and the users along with commitments and zero-knowledge proofs for achieving unlinkability of transactions.

\para{Bridge Reachability.} Our algorithm relies on a technique for testing reachability of bridges from outside the censored territory. This assumption is justified via work by Dingledine~\cite{Dingledine:BridgeReach:2011} and Ensafi~\etal~\cite{Ensafi:2014:PAM}, which describe active scanning mechanisms for testing reachability of bridges from outside the censored territory. 

\para{Handling DPI and Active Probing.} The Tor Project has developed a variety of tools known as \emph{pluggable transports}~\cite{Tor:PluggableTransport} to obfuscate the traffic transmitted between clients and bridges. This makes it hard for the censor to perform \emph{deep packet inspection (DPI)} attacks, since distinguishing actual Tor traffic from legitimate-looking obfuscated traffic is hard.

The censor can also block bridges using \emph{active probing}: he can passively monitor the network for suspicious traffic, and then actively probe dubious servers to block those that are determined to run the Tor protocol~\cite{Ensafi2015b}.
We believe that active probing will be defeated in the future using a combination of ideas from CAPTCHAs, port knocking~\cite{PortKnocking2003}, and format transforming encryption~\cite{Dyer:2013:PMM:2508859.2516657}.
Depending on the sophistication of the censor, \bricks/ may be used in parallel with tools that can handle DPI and active probing to provide further protection against blocking.



\para{Resource-Competitive Analysis.} Our analytical approach to bridge distribution can be seen as an application of the \emph{resource-competitive analysis} introduced by Gilbert~\etal~\cite{Gilbert:2012:RAN:2335470.2335471,Bender:2015:SIGACT}. This approach evaluates the performance of any distributed algorithm under attack by an adversary in the following way: if the adversary has a budget of $t$, then the worst-case resource cost of the algorithm is measured by some function of $t$. The adversary's budget is frequently expressed by the number of corrupt nodes controlled by the adversary. This model allows the system to adaptively increase/decrease its resource cost with respect to the \emph{current} amount of corruption by the adversary. Inspired by this model, we design resource-competitive algorithms for bridge distribution that scale reasonably with the adversary's budget.

\section{Our Algorithms} \label{sec:algorithm} 
We first construct a bridge distribution algorithm that is run locally by a single distributor. In Section~\ref{sec:multi-dist}, we extend this algorithm to the multiple distributor model.
We prove the desired properties of these algorithms in Section~\ref{sec:ProofBasic} and Section~\ref{sec:multi-dist} respectively. 
Before proceeding to our algorithms, we define standard terms and notation used in the rest of the paper. 

\para{Notation.} We say an event occurs \emph{with high probability}, if it occurs with probability at least \emph{${1-1/n^\kappa}$}, for some constant ${\kappa \geq 1}$. We denote the set of integers ${\{1,...,n\}}$ by $[n]$, the natural logarithm of any real number $x$ by $\ln{x}$, and the logarithm to the base 2 of $x$ by $\log{x}$. We denote a set of $n$ users participating in our algorithms by ${\{u_1,...,u_n\}}$. We define the \emph{latency} of our algorithm as the maximum number of rounds of communication that any user has to perform with the distributor(s) until he obtains at least one unblocked bridge. \fullpaper{We say a bridge is \emph{blocked} when the censor has restricted users' access to this bridge. We refer to the remaining bridges as \emph{unblocked} bridges.}

\subsection{Basic Algorithm} \label{sec:basic-alg}
Our basic algorithm (shown in Algorithm~\ref{alg:Basic}) is run locally by one distributor.\footnote{By ``run locally'', we mean the distributor computes user-bridge assignments independent of any other distributor and without exchanging any information with them.} 
The algorithm proceeds in \emph{rounds}, where each round corresponds to an increment of the variable $i$ in the while loop.
In each round, the algorithm assigns a set of bridges randomly to a fixed group of users and proceeds to the next round only if the number of blocked bridges exceeds a threshold that is increased in each round. 

The number of bridges distributed in every round is determined based on the threshold in that round as depicted in Figure~\ref{fig:rounds}. If the number of bridges to be distributed in the current round becomes larger than the number of users, $n$, then the algorithm simply assigns a unique unblocked bridge to every user. This happens only if the adversary blocks a large number of bridges and we believe it does not happen in most practical cases. If it happens though, it becomes more reasonable for the algorithm to give each user a unique user.

The exponential growth of the number of bridges distributed in each round allows us to achieve a logarithmic number of rounds (with respect to $t$) until all users can connect to Tor with high probability. In Lemma~\ref{lem:NumIterationsBasic}, we calculate the exact number of rounds required to achieve this goal.

\begin{algorithm}[t]
	\caption{\bricks/ -- Basic Algorithm}
	\label{alg:Basic}
	\vspace{0.4em}
	\textbf{Goal:} Distributes a set of $O(t\log{n})$ bridges among a set of users $\{u_1,...,u_n\}$.
	
	\algFont \vspace{2pt}
	\begin{algorithmic}[1]
		\Statex \hspace{-\algorithmicindent} Run $3\log{n}$ instances of this algorithm in parallel:
		\State Initialize parameters: ${i \gets 0}$; ${b_i \gets 16}$  \label{ln:algstart}
		\While{\True}
			\If {$b_i \geq 0.6 \times 2^{i+4}$} \label{ln:ConditionSimple}
				\State $i \gets i+1$ \label{ln:IncrementSimple}
				\State $d_i \gets 2^{i+4}$
				\If {$d_i < \frac{n}{3\log{n}}$}
					\State $\{B_1,...,B_{d_i}\} \gets$ $d_i$ unblocked bridges \label{ln:RecruitBridges}						
					\ForAll{$j \in [n]$}
						\State Pick $k \in [d_i]$ uniformly at random 
						\State Assign bridge $B_{k}$ to user $u_j$				
					\EndFor			
				\Else
					\State Assign a unique bridge to every user
					\State \textbf{break}
				\EndIf
			\EndIf
			\State 
			\State Check for $b_i \gets$ number of blocked bridges in $\{B_1,...,B_{d_i}\}$
		\EndWhile	\label{ln:algend}
	\end{algorithmic}	
\end{algorithm}

We later show that if one instance of Steps~\ref{ln:algstart}--\ref{ln:algend} of Algorithm~\ref{alg:Basic} is run, then it guarantees that all users can connect to Tor with some constant probability. Therefore, if we repeat these steps ${3\log{n}}$ times, then we can guarantee that all users can connect to Tor with high probability.
If the $3\log{n}$ instances run completely independently, then the adversary can take advantage of this to increase the latency of the algorithm by a factor of $3\log{n}$ using a \emph{serialization attack}: it can strategically coordinate with its corrupt users to block the assigned bridges in such a way that the instances proceed to the next round one at a time. 
We prevent this attack by maintaining a single round counter, $i$, for all instances: whenever the number of blocked bridges in \emph{any} of the instances exceeds the threshold for the current round (\ie,~${0.6 \times 2^{i+4}}$), all instances proceed to the next round.

In every round, \bricks/ only distributes unblocked bridges. To minimize the total number of bridges required, we use unblocked bridges from previous rounds in the current distribution round. This can be done by removing blocked bridges from the pile of previously recruited bridges and adding a sufficient number of new bridges to accommodate the new load.

In each round, \bricks/ sends to every user a single message containing $3\log{n}$ bridge addresses assigned to this user in all  instances of Algorithm~\ref{alg:Basic} that run in parallel. This message is sent to the user via the rate-limited channel (e.g., email). \fullpaper{This is done without requiring the distributor to know the actual IP address of the user. Once the user obtains the set of bridge addresses, he can try to connect to these bridges in parallel to reduce latency.}

\begin{figure*}
	\centering
	\includegraphics[width=0.75\linewidth]{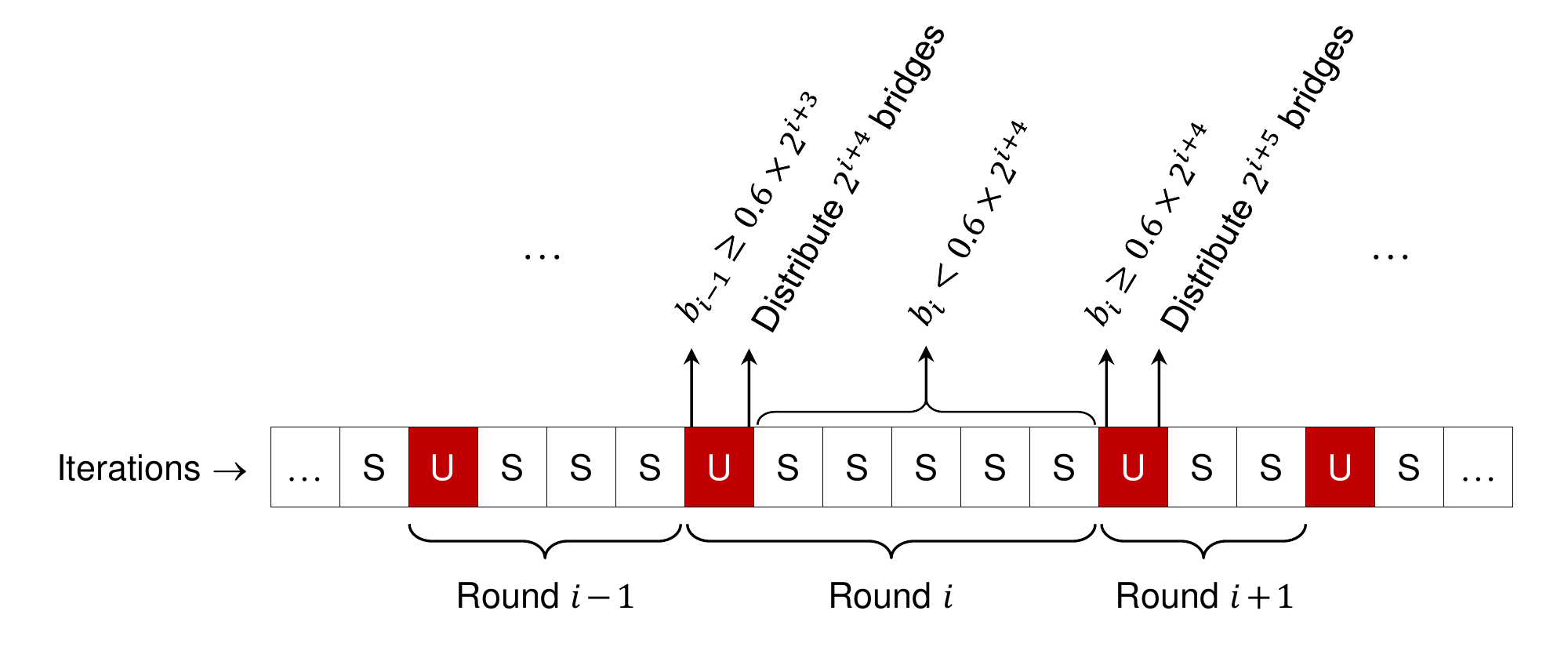}
	\caption{Number of bridges distributed in round $i$ of Algorithm~\ref{alg:Basic}. S and U indicate successful and unsuccessful rounds.}
	\label{fig:rounds}
\end{figure*}

\subsubsection{Analysis of Algorithm~\ref{alg:Basic}} \label{sec:ProofBasic}
We now prove that Algorithm~\ref{alg:Basic} achieves the properties described in Theorem~\ref{thm:main} in the single distributor model. 
We assume a user can connect to Tor in an iteration of the while loop if and only if at least one unblocked bridge is assigned to it. 
Although the adversary has a total budget of $t$ corrupt users, only some of the corrupt users might be actively blocking bridges in any given round. 
Before stating our first lemma, we define the following variables:
\begin{itemize}
	\item $b_i$: number of bridges blocked in round $i$.	
	\item $d_i$: number of bridges distributed in round $i$.
	\item $t_i$: number of corrupt users each of whom has blocked at least one bridge in round $i$.
\end{itemize}

\begin{lemma}[\sans{Robustness}] \label{lem:robustness}
	In round $i$ of Algorithm~\ref{alg:Basic}, if ${b_i < 0.6 \times 2^{i+4}}$, then all honest users can connect to Tor with high probability.
\end{lemma}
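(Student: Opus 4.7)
The plan is to combine a per-instance, per-user failure bound with independent amplification across the $3\log n$ parallel copies of the algorithm, followed by a union bound over the honest users.

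First I will fix a single parallel instance and a single honest user $u$. In round $i$ of that instance, $u$'s bridge is drawn uniformly at random from $\{B_1,\ldots,B_{d_i}\}$ using private bits that the adversary does not see. The set $S$ of bridges the adversary eventually blocks in this instance is a function of the assignments delivered to corrupt users (plus any adversarial randomness), and is therefore independent of $u$'s draw. Under the hypothesis of the lemma, $|S| = b_i < 0.6\cdot 2^{i+4} = 0.6\, d_i$, so for every realization of $S$ the probability that $u$'s bridge lies in $S$ is at most $|S|/d_i < 0.6$. I expect this to be the main subtlety of the argument: the adversary is adaptive and may choose $S$ based on all corrupt users' assignments across every parallel instance, but since the adversary never observes an honest user's private random bits, conditioning on the adversary's full view and strategy still leaves $u$'s bridge uniformly distributed on $d_i$ options, which is what makes the $|S|/d_i$ bound go through.

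Next I will amplify across the $3\log n$ parallel instances. Because each instance uses fresh, independent private randomness to select $u$'s bridge, the events ``$u$'s bridge is blocked in instance $k$'' are mutually independent once we condition on the adversary's entire view and all of its per-instance blocking strategies. Hence the probability that every one of $u$'s $3\log n$ assigned bridges is blocked is at most $0.6^{3\log n} = n^{-3\log_2(5/3)}$, and since $\log_2(5/3) > 0.73$, this is at most $n^{-2.2}$.

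Finally, a union bound over the at most $n$ honest users shows that the probability some honest user has all of their bridges blocked is at most $n \cdot 0.6^{3\log n} \leq n^{-1.2}$. Therefore, with probability at least $1 - 1/n^{1.2}$, every honest user receives at least one unblocked bridge in round $i$ and can connect to Tor, which establishes the claim with $\kappa \geq 1$. The only routine checks that remain are the constant arithmetic $3\log_2(5/3) > 2$ (which gives $\kappa \geq 1$) and noting that the branch $d_i \geq n/(3\log n)$ handled in lines 11--13 of Algorithm~\ref{alg:Basic} is trivial, since the algorithm then hands out a unique unblocked bridge to every user.
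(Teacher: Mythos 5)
Your proof is correct, and it takes a genuinely different route from the paper's. The paper never directly invokes the hypothesis $b_i < 0.6\times 2^{i+4}$; instead it bounds the number of \emph{bad} bridges $Y$ (bridges assigned to at least one corrupt user, i.e., bridges the adversary \emph{could} block), computing $\E[Y] < (1-1/e^2)d_i$ by linearity of expectation and then concentrating $Y$ around its mean via a Doob martingale and the Azuma--Hoeffding inequality; the per-user, per-instance failure probability $0.6$ then emerges as $p_1 + (1-p_1)p_2$, where $p_1$ is the probability the concentration fails and $p_2$ the probability of landing on a bad bridge, and this requires $t_i < d_i$. You instead take the lemma's hypothesis at face value: the blocked set $S$ has $|S| < 0.6\,d_i$ deterministically on the conditioning event, and since $S$ is a function of the adversary's view while an honest user's draw is uniform and independent of that view, the per-instance failure probability is $<0.6$ with no concentration inequality at all. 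The independence/conditioning step you flag is indeed the crux, and your handling of it (the event $\{|S|<0.6\,d_i\}$ is measurable with respect to the adversary's view, which is independent of the honest user's private coins, even for an adaptive adversary) is sound; the amplification $0.6^{3\log n}\le n^{-2.2}$ and the union bound then match the paper exactly. What the two approaches buy: yours is shorter, elementary, and actually uses the stated hypothesis, so it proves the lemma as written more directly; the paper's martingale argument proves a stronger fact --- that with constant probability only a constant fraction of bridges are even \emph{learnable} by the $t_i$ active corrupt users --- which is the form of the bound one would need if the adversary could later block everything it knows, and which connects more naturally to the budget accounting in the latency and bridge-cost lemmas. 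Both are valid proofs of the stated robustness claim.
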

\begin{proof}
	We first consider the execution of one of the $3\log{n}$ repeats of Algorithm~\ref{alg:Basic}. For each user, the algorithm chooses a bridge independently and uniformly at random and assigns it to the user. Without loss of generality, assume the corrupt users are assigned bridges first.
	
	For ${k=1,2,...,t_i}$, let $\left\{X_k\right\}$ be a sequence of random variables each representing the bridge assigned to the $k$-th corrupt user. Also, let $Y$ be a random variable corresponding to the number of \emph{bad} bridges (\ie, the bridges that are assigned to at least one corrupt user) after all $t_i$ corrupt users are assigned bridges. The sequence ${\left\{Z_k = \E[Y|X_1,...,X_k]\right\}}$ defines a Doob martingale~\cite[Chapter~5]{dubhashi:2009}, where ${Z_0 = \E[Y]}$. 
	Since each corrupt user is assigned a fixed bridge with probability $1/d_i$, the probability that the bridge is assigned to at least one corrupt user is ${1-(1-1/d_i)^{t_i}}$. By symmetry, this probability is the same for all bridges. Thus, by linearity of expectation,
	\[\E[Y] = \left(1 - \left(1-1/d_i\right)^{t_i}\right)d_i < (1 - e^{-(t_i+1)/d_i})d_i.\]
	
	We know ${t_i < 2^{i+4}}$, because in each round ${d_i = 2^{i+4}}$ bridges are distributed and each corrupt user is assigned exactly one bridge; thus, each corrupt user can block at most one bridge. Hence, 
	\begin{align}
		\E[Y] < (1 - 1/e^{1+1/2^{i+4}})d_i \leq (1 - 1/e^2)d_i \label{eq:expectedBounds}
	\end{align}
	Therefore, in expectation at most a constant fraction of the bridges become bad in each instance of the algorithm. 
	
	Since ${|Z_{k+1} - Z_k| \leq 1}$, ${Z_0 = \E[Y]}$, and ${Z_{t_i} = Y}$, by the Azuma-Hoeffding inequality~\cite[Theorem 5.2]{dubhashi:2009},
	\[\Pr\left(Y > \E[Y] + \lambda\right) \leq e^{\nicefrac{-2\lambda^2}{t_i}},\]
	for any ${\lambda > 0}$. 
	By setting ${\lambda = \sqrt{d_i}}$, we have
	\begin{align}
		\Pr(Y > \E[Y] + \sqrt{d_i}) \leq e^{-2d_i/t_i} < 1/e^2. \label{eq:p1}
	\end{align}
	The last step holds since ${t_i < d_i}$. Therefore, with at most a constant probability, the actual number of bad bridges is larger than its expected value by at most $\sqrt{d_i}$. Therefore, the probability that an honest user is assigned a bad bridge is at most
	\begin{align}
		\frac{\E[Y] + \sqrt{d_i}}{d_i} &< \frac{(1-1/e^{1+1/2^{i+4}})d_i + \sqrt{d_i}}{d_i} \nonumber \\ &= 1/e^{1+1/2^{i+4}} + 1/\sqrt{d_i}, \label{eq:p2}
	\end{align}
	where the first step is achieved using~\eqref{eq:expectedBounds}.
	
	Now, let ${p_1 = \Pr(Y > \E[Y] + \sqrt{d_i})}$, and let $p_2$ be the probability that a fixed honest user is assigned a bad bridge in a fixed instance and a fixed round. From~\eqref{eq:p1} and~\eqref{eq:p2}, we have
	\[p_1 < 1/e^2 \quad \text{and} \quad p_2 < 1/e^{1+1/2^{i+4}} + 1/\sqrt{d_i}.\]
	Thus, the probability that a fixed user fails to receive a good bridge in a fixed instance and a fixed round is equal to ${p_1 + (1-p_1)p_2}$, which is at most $0.6$.
	
	If the algorithm is repeated ${\lceil 3\log{n} \rceil}$ times in parallel, then the probability that the user is assigned to only bad bridges in the last round is at most ${0.6^{\lceil 3\log{n} \rceil} \leq 1/n^2}$.
	By a union bound, the probability that any of the $n$ users is assigned a bad bridge in a round is at most $1/n$. 
	Therefore, all honest users can connect to Tor with high probability.
\end{proof}


Algorithm~\ref{alg:Basic} does not necessarily assign the same number of users to each bridge. However, in the following lemma, we show that each bridge is assigned to almost the same number of users as other bridges with high probability providing a reasonable level of load-balancing.

\begin{lemma}[\sans{Bridge Load-Balancing}]
	Let $X$ be a random variable representing the maximum number of users assigned to any bridge, $Y$ be a random variable representing the minimum number of users assigned to any bridge, and $z = \Theta\left(\frac{\ln{n}}{\ln{\ln{n}}}\right)$. Then, we have 
	\[
	\Pr\left(X \geq \mu z\right) \leq 2/n \quad \text{and} \quad 
	\Pr\left(Y \leq \mu z\right) \leq 2/n,
	\]
	where ${\mu = n/d_i}$.
\end{lemma}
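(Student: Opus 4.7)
The plan is to view round $i$ of Algorithm~\ref{alg:Basic} as a balls-and-bins experiment: the $n$ users are thrown independently and uniformly into the $d_i$ bridges, so the load $L_j$ of bridge $j$ is distributed as $\mathrm{Binomial}(n, 1/d_i)$ with mean $\E[L_j] = n/d_i = \mu$. Under this view the lemma reduces to the classical fact that with $n$ balls and $d_i$ bins the maximum and minimum loads concentrate around $\mu$ to within a multiplicative factor of $z = \Theta(\ln n / \ln\ln n)$.

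For the upper bound on $X$, I would apply the multiplicative Chernoff bound to a single $L_j$: for $z \geq e^2$,
\[
\Pr(L_j \geq z\mu) \leq \left(\frac{e}{z}\right)^{z\mu}.
\]
Choosing $z = c\ln n / \ln\ln n$ for a sufficiently large constant $c > 0$ drives the right-hand side below $2/(n d_i)$ in every allowed regime, and a union bound over the $d_i$ bridges then yields $\Pr(X \geq z\mu) \leq 2/n$. For the lower bound on $Y$, I read the statement as $\Pr(Y \leq \mu/z) \leq 2/n$ (the form as written appears to be a typo, since $Y \leq \mu \leq z\mu$ holds deterministically once $z \geq 1$), and I would apply the matching lower-tail Chernoff bound to each $L_j$ before union-bounding over the bridges.

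The main obstacle is the regime dependence of $\mu$. When $d_i$ is small, $\mu$ is large and Chernoff gives much tighter $(1 \pm o(1))\mu$ concentration per bin, so the slack factor $z$ is loose but harmless. The delicate case is $d_i$ close to the cap $n/(3\log n)$ imposed by Algorithm~\ref{alg:Basic}, where $\mu = \Theta(\log n)$ and the upper tail genuinely requires the $\ln n / \ln\ln n$ scaling — the classical max-load phenomenon for balanced balls-and-bins. I would fix the hidden constant in $z$ so that both tails succeed in this tightest regime; the same cap $d_i < n/(3\log n)$ guarantees $\mu/z = \Omega(\ln\ln n) \geq 1$, keeping the min-load bound meaningful and allowing the lower-tail Chernoff argument to go through uniformly in $d_i$.
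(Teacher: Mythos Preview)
Your proposal is correct and follows the same overall architecture as the paper: model round~$i$ as a balls-and-bins process with $n$ balls and $d_i$ bins, bound the tail of a single bin's load via a Chernoff-type inequality, and union bound over the $d_i$ bins. You also correctly observe that the min-load clause should read $\Pr(Y \leq \mu/z)$ rather than $\Pr(Y \leq \mu z)$.

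The one genuine difference is in the tail-bound mechanics. You work directly with the exact (binomial) load $L_j$ and apply the multiplicative Chernoff bound to it. The paper instead passes through the Poisson approximation: it bounds $\Pr(\tilde X_j \geq \mu y)$ for the Poisson surrogate $\tilde X_j$ using the Poisson Chernoff bound $\Pr(\tilde X_j \geq x) \leq e^{-\mu}(e\mu/x)^x$, reduces this to $1/n^2$ by choosing $y = ez$ and invoking that $z = \Theta(\ln n/\ln\ln n)$ solves $z^z = n$, and then transfers the bound back to the exact distribution via the monotone-event correction (Corollary~5.11 of Mitzenmacher--Upfal), which is precisely where the factor $2$ in $2/n$ originates. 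Your direct-binomial route is a bit more elementary and avoids this detour; the paper's route has the virtue of making the constant $2$ in the statement non-mysterious. Either way the calculation and the choice of $z$ are the same, so the two arguments are interchangeable.
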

\begin{proof}
	Each round of Algorithm~\ref{alg:Basic} can be seen as the classic balls-and-bins process: $n$ balls (users) are thrown independently and uniformly at random into $d_i$ bins (bridges). It is well known that the distribution of the number of users assigned to a bridge is approximately Poisson with ${\mu = n/d_i}$ \cite[Chapter~5]{Michael2005}.
	
	Let $X_j$ be the random variable corresponding to the number of users assigned to the $j$-th bridge, and let $\tilde{X}_j$ be the Poisson random variable approximating $X_j$. We have ${\mu = \E[X_j] = \E[\tilde{X}_j] = n/d_i}$. We use the following Chernoff bounds from \cite[Chapter~5]{Michael2005} for Poisson random variables:
	\begin{align}
		\Pr(\tilde{X}_j \geq x) \leq e^{-\mu}(e\mu/x)^x \text{, when } x > \mu \label{eq:poissonMax}\\
		\Pr(\tilde{X}_j \leq x) \leq e^{-\mu}(e\mu/x)^x \text{, when } x < \mu \label{eq:poissonMin}
	\end{align}
	
	\noindent Let ${x = \mu y}$, where ${y = ez}$. From \eqref{eq:poissonMax}, we have
	\begin{align}
		\Pr(\tilde{X}_j \geq \mu y) &\leq \left(\frac{e^{y-1}}{y^y}\right)^\mu \nonumber \\
		&\leq \frac{e^{y-1}}{y^y} \nonumber \\ 
		&= \frac{1}{e}\left(\frac{1}{z^z}\right)^e < \frac{1}{n^2}. \label{eq:approxBound}
	\end{align}
	The second step is because ${y^y > e^{y-1}}$ (since ${z > 1}$) and ${\mu > 1}$. The last step is because ${z = \Theta\left(\frac{\ln{n}}{\ln{\ln{n}}}\right)}$ is the solution of ${z^z = n}$. To show this, we take log of both sides of ${z^z = n}$ twice, which yields
	\begin{align*}
		\ln{z} + \ln{\ln{z}} = \ln{\ln{n}}.
	\end{align*}
	We have
	\begin{align*}
		\ln{z} \leq \ln{z} + \ln{\ln{z}} = \ln{\ln{n}} < 2\ln{z}.
	\end{align*}
	Since $z\ln{z} = \ln{n}$,
	\begin{align*}
		z/2 < \frac{\ln{n}}{\ln{\ln{n}}} \leq z.
	\end{align*}
	Therefore, $z = \Theta\left(\frac{\ln{n}}{\ln{\ln{n}}}\right)$. 
	
	It is shown in~\cite[Corollary 5.11]{Michael2005} that for any event that is monotone in the number of balls, if the event occurs with probability at most $p$ in the Poisson approximation, then it occurs with probability at most $2p$ in the exact case. Since the maximum and minimum bridge loads are both monotonically increasing in the number of users, from (\ref{eq:approxBound}) we have
	\begin{align*}
		\Pr(X_j \geq \mu y) \leq 2\Pr(\tilde{X}_j \geq \mu y) < 2/n^2.
	\end{align*}
	
	By applying a union bound over all bridges, the probability that the number of users assigned to any bridge will be more than $\mu z$ is at most $2/n$. The bound on the minimum load can be shown using inequality~(\ref{eq:poissonMin}) in a similar way.
\end{proof}

\begin{lemma}[\sans{Latency}] \label{lem:NumIterationsBasic}
	By running Algorithm~\ref{alg:Basic}, all honest users can connect to Tor with high probability after at most ${\lceil \log{\lceil (t+1)/32 \rceil} \rceil + 1}$ iterations of the while loop.
\end{lemma}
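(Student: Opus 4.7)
The plan is to identify the first round $r$ in which the number of distributed bridges exceeds the adversarial budget $t$, and then invoke Lemma~\ref{lem:robustness} at that round to conclude that all honest users connect to Tor. The key observation is that in any single instance of the algorithm and any round $i$, each of the $t$ corrupt users is assigned exactly one of the $d_i$ bridges, so $t_i \leq t$; consequently, the bound $t_i < d_i$ that drives the proof of Lemma~\ref{lem:robustness} holds automatically as soon as $d_i = 2^{i+4} > t$.

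Accordingly, I would define $r$ to be the smallest positive integer satisfying $2^{r+4} \geq t+1$. At round $r$, Lemma~\ref{lem:robustness} then guarantees that every honest user receives an unblocked bridge in at least one of the $3\log{n}$ parallel instances with high probability. To match the stated expression, I would rewrite $2^{r+4} \geq t+1$ as $2^{r-1} \geq (t+1)/32$, so the smallest nonnegative integer value of $r-1$ for which this holds is $\lceil \log \lceil (t+1)/32 \rceil \rceil$. Adding the offset of $1$, forced by the initialization $b_0 = 16$ (which makes the guard condition $b_i \geq 0.6 \cdot 2^{i+4}$ hold at $i = 0$ and hence always advances the algorithm at least once), yields exactly $r = \lceil \log \lceil (t+1)/32 \rceil \rceil + 1$.

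I would then verify that the algorithm actually reaches round $r$ in at most this many iterations of the while loop. Since $i$ advances by at most one per iteration (precisely when the guard $b_i \geq 0.6 \cdot 2^{i+4}$ fires), and since the adversary can force the guard to fire only as long as $0.6 \cdot 2^{i+4} \leq t$, the worst-case number of advancing iterations until $i = r$ is $r$. I would also briefly address the edge case $d_r \geq n/(3\log{n})$, in which the algorithm takes the break branch and assigns each user a unique unblocked bridge, trivially achieving the desired connectivity.

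The main subtlety I anticipate is matching the stated bound exactly at the boundary of small $t$, where the nested ceiling operations become tight, and carefully distinguishing between maximally adversarial blocking (which forces the full $r$ advancements) and weaker blocking (for which Lemma~\ref{lem:robustness} may apply strictly earlier, yielding a latency below the stated bound).
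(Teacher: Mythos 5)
Your proposal is correct in substance and lands on exactly the stated formula, but it reaches it by a genuinely different route than the paper. The paper's proof reasons about the adversary's ability to keep the loop advancing: it argues the adversary must block at least $0.6 \times 2^{i+4}$ bridges to force round $i+1$, identifies the last round in which its budget $t$ suffices, and declares the next round terminal. You instead identify the first round $r$ in which $d_r = 2^{r+4} \geq t+1$, observe that the robustness analysis then applies \emph{regardless of whether the adversary manages to push the loop further}, and conclude that every honest user is served by round $r$. Your version is arguably the cleaner and more defensible one: the paper's own threshold comparison (budget $t$ versus $0.6 \times 2^{i+4}$) does not actually reproduce the stated bound $\lceil \log \lceil (t+1)/32 \rceil \rceil + 1$ --- it would give a round indexed off $t/9.6$ rather than $t/32$ --- whereas your comparison of $t$ against $d_i$ itself does, and it also correctly decouples ``all users are connected'' from ``the loop stops advancing.'' The one point you should make explicit rather than leave implicit: you are not invoking Lemma~\ref{lem:robustness} as stated (its hypothesis is $b_i < 0.6 \times 2^{i+4}$, which can fail at your round $r$ when $0.6\, d_r \leq t < d_r$), but rather the fact that its \emph{proof} only uses $t_i < d_i$ to bound $\E[Y]$ and to apply Azuma. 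That reading is valid --- the Doob-martingale argument goes through verbatim under $t_i \leq t < d_r$ --- but a complete write-up should restate the robustness lemma with that weaker hypothesis, or prove a corollary of it, before citing it at round $r$. Your accounting of the initial iteration (the guard fires vacuously at $i=0$ because $b_0 = 16$, so the first genuine distribution round uses $32$ bridges) and your observation that any non-advancing earlier iteration already satisfies the original hypothesis of Lemma~\ref{lem:robustness} (so latency can only be smaller) both match what is needed to close the argument.
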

\begin{proof}
	Let $k$ denote the smallest number of rounds required until all users can connect to Tor with high probability. Intuitively, $k$ is  bounded, because the number of corrupt nodes, $t$, is bounded. In the following, we find $k$ with respect to $t$. 
	
	Without loss of generality, we only consider one of the $3\log{n}$ parallel instances of Steps~\ref{ln:algstart}--\ref{ln:algend} of Algorithm~\ref{alg:Basic}. The best strategy for the adversary is to maximize $k$, because this prevents the algorithm from succeeding soon. In each round $i$, this can be achieved by minimizing the number of bridges blocked (\ie, $b_i$), while ensuring the algorithm proceeds to the next round. However, the adversary has to block at least ${0.6 \times 2^{i+4}}$ bridges in each round to force the algorithm to proceed to the next round. Let $\ell$ be the smallest integer such that ${2^\ell \geq t}$. In round $\ell$, the adversary has enough corrupt users to take the algorithm to round ${\ell + 1}$. However, in round ${{\ell + 1}}$, the adversary can block at most ${2^\ell < 2^{\ell+1}}$ bridges, which is insufficient for proceeding to round ${\ell + 2}$. Therefore, ${\ell + 1}$ is the last round and ${k = \ell + 1}$. Since ${2^\ell \geq t}$, and the algorithm starts by distributing 32 bridges, 
	\[k = \lceil \log{\lceil (t+1)/32 \rceil} \rceil + 1.\] 
	In other words, if the while loop runs for at least ${\lceil \log{\lceil (t+1)/32 \rceil} \rceil + 1}$ iterations, then with high probability all honest users can connect to Tor.
\end{proof}

\begin{lemma}[\sans{Bridge Cost}] \label{lem:NumBridgesBasic}
	The total number of bridges used by Algorithm~\ref{alg:Basic} is at most ${(10t + 96)\log{n}}$. 
\end{lemma}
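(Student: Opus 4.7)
My plan is to express the total bridge count as the cumulative number of newly recruited bridge addresses across rounds and parallel instances, then bound each component using the geometric structure of the algorithm together with the latency bound already proved in Lemma~\ref{lem:NumIterationsBasic}.

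First, I would analyze a single one of the $3\log n$ parallel instances. In round $i$, the algorithm maintains a pool of $d_i = 2^{i+4}$ unblocked bridges. It can reuse the $d_{i-1} - b_{i-1}$ unblocked bridges that survived round $i-1$, so the number of freshly recruited bridges at round $i$ (for $i \geq 2$) is $d_i - (d_{i-1} - b_{i-1}) = d_{i-1} + b_{i-1}$. Adding the $d_1 = 32$ bridges recruited at the start and telescoping, the total bridges recruited in a single instance is
\[ d_1 + \sum_{i=2}^{k}\bigl(d_{i-1} + b_{i-1}\bigr) \;=\; d_k + \sum_{i=1}^{k-1} b_i, \]
where $k$ is the index of the final round of that instance.

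Second, I would bound each summand. For $d_k$, Lemma~\ref{lem:NumIterationsBasic} gives $k \leq \lceil \log\lceil (t+1)/32\rceil \rceil + 1$, so that $d_k = 2^{k+4} \leq 2(t+1) + 64$ after propagating the two ceilings. For $\sum_{i=1}^{k-1} b_i$, I would use the trivial inequality $b_i \leq d_i$ together with the geometric identity $\sum_{i=1}^{k-1} d_i = d_k - 32$, yielding $\sum_{i=1}^{k-1} b_i \leq d_k - 32$. Combining, the per-instance cost is at most $2 d_k - 32$.

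Third, summing over the $3\log n$ independent instances gives a total of at most $3\log n \cdot (2 d_k - 32) = (6 d_k - 96)\log n$, and substituting the bound on $d_k$ produces a bound of the form $(c\, t + c')\log n$ matching the lemma's statement $(10t+96)\log n$ up to the exact constants.

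The main obstacle I anticipate is squeezing out the precise constant $10$ in front of $t$. The naive doubling inherent in counting $d_k$ plus the telescoped blocked bridges $\sum b_i \leq d_k - 32$ gives a constant that is slightly larger than $10$; tightening this will likely require exploiting the sharper bound $b_i \leq \min(t, d_i)$ together with the fact that the adversary can only ever cause the algorithm to proceed past the round where $d_i$ first exceeds $t/0.6 = 5t/3$, so that at most one round contributes a $b_i$ strictly less than $d_i$. Carefully partitioning the sum $\sum b_i$ into the geometric prefix where $d_i \leq t$ and the short tail where $d_i > t$ should recover the stated constants without changing the overall $O(t\log n)$ bound.
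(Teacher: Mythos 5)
Your decomposition is essentially the paper's: per instance, total bridges $=$ final pool size $d_k$ plus cumulative blocked bridges $\sum_{i<k} b_i$ (the paper writes this as $M_i = a_i + \sum_{j<i} b_j$ with $a_i \le 2^{i+4}$), then bound $d_k$ via the latency lemma and multiply by $3\log n$. The one substantive divergence is how $b_i$ is bounded, and that is exactly where the constant $10$ lives. The paper asserts $b_i < 0.6 \times 2^{i+4}$, so $\sum_{j<k} b_j \le 0.6(d_k - 32) \approx 1.2t$ and the per-instance total is $\approx 1.6\,d_k \approx 3.2t$, giving $(9.6t + \cdot)\log n \le (10t+96)\log n$. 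Your bound $b_i \le d_i$ gives $\sum_{j<k} b_j \le d_k - 32 \approx 2t$ and a per-instance total of $\approx 2d_k \approx 4t$, hence roughly $(12t + \cdot)\log n$ --- as you anticipated, the constant does not come out to $10$.

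The repair you sketch will not close this gap. With $k = \lceil\log\lceil(t+1)/32\rceil\rceil+1$ one has $d_{k-1} \le t + O(1)$ and $d_i \le t$ for all $i \le k-2$, so $\min(t,d_i) = d_i$ in every round except possibly the last blocking round; the partitioned sum is still $\approx 2t$, not $1.2t$. The only way to reach the stated constant is the paper's premise that the adversary blocks no more than the round threshold $0.6\times 2^{i+4}$ --- i.e., the ``prudent'' adversary. That premise is justified in the paper only as the adversary's latency-maximizing strategy; it is not an upper bound on blocking (the paper's own ``aggressive'' adversary blocks all $d_i$ bridges it learns, which maximizes bridge cost rather than latency). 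So your $b_i \le \min(t, d_i)$ is the honest worst-case bound, and it yields $O(t\log n)$ with a leading constant of about $12$; to recover the literal $(10t+96)\log n$ you must additionally adopt the paper's prudent-adversary assumption, which your proposal (correctly) declines to do.
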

\begin{proof}
	Consider one of the $3\log{n}$ instances of Algorithm~\ref{alg:Basic}. The algorithm starts by distributing  $32$ bridges. In every round ${i > 0}$, the algorithm distributes a new bridge only to replace a bridge blocked in round ${i-1}$. Thus, the total number of bridges used until round $i$, denoted by $M_i$, is equal to the number of bridges blocked until round $i$ plus the number of new bridges distributed in round $i$, which we denote by $a_i$. Therefore,
	\begin{align}
		M_i = a_i + \sum_{j=0}^{i-1} b_j. \label{eq:NumBridges}
	\end{align}
	In round $i$, the algorithm recruits ${a_i \leq 2^{i+4}}$ new bridges, because some of the bridges required for this round might be reused from previous rounds. Since in round $i$ we have ${b_i < 0.6 \times 2^{i+4}}$,
	\[M_i < 2^{i+4} + 0.6\sum_{j=1}^{i-1} 2^{j+4} = 9.6(2^i - 2) + 2^{i+4}\]
	
	From Lemma~\ref{lem:NumIterationsBasic}, it is sufficient to run the algorithm ${k = \lceil \log{\lceil (t+1)/32 \rceil} \rceil + 1}$ rounds. Therefore,	
	\[M_k < 9.6(2^k - 2) + 2^{k+4} \leq 3.2t + 32.\]
	
	Since the algorithm is repeated $3\log{n}$ times, the total number of bridges used by the algorithm is at most ${(10t + 96)\log{n}}$. 
\end{proof}

\subsubsection{Handling Client Churn} \label{sec:churn}
Algorithm~\ref{alg:Basic} can only distribute bridges among a fixed set of users. A more realistic scenario is when users join or leave the algorithm frequently. One way to handle this is to add the new users to the algorithm from the next round (\ie, increment of $i$). This, however, introduces two technical challenges: 
\begin{enumerate}[itemsep=0.5em, topsep=0.6em]
	\item The number of corrupt users is unknown, and hence the adversary can arbitrarily delay the next round, causing a denial of service attack; and
	
	\item Our proof of robustness (Lemma~\ref{lem:robustness}) does not necessarily hold if $n$ is changed, because we repeat the algorithm $3\log{n}$ times to ensure it succeeds with high probability. 
\end{enumerate}

\noindent To handle these challenges, we	 add the following steps to Algorithm~\ref{alg:Basic}:

\begin{enumerate}[itemsep=0.5em, topsep=0.6em]
	\item Each time a user wants to join the system, assign him to $3\log{n}$ random bridges from the set of bridges recruited in the last round (\ie, the last time $i$ was incremented);
	
	\item If the total number of users, $n$, is doubled since the last round, recruit $3 \times 2^{i+4}$ unblocked bridges and assign $3$ of them randomly to each user.
\end{enumerate}

The first step guarantees that the new users are always assigned bridges once they join the system. The second step ensures that the number of parallel instances always remains $3\log{n}$ even if $n$ is changed. This is because $\log{n}$ is increased by one when $n$ is doubled. Therefore, each existing user must receive $3$ new bridges so that the proof of Lemma~\ref{lem:robustness} holds in the setting with churn. Our previous lemmas hold if users leave the system; thus, we only need to update $n$ once they leave.

Since distributing new bridges among existing users is done only after the number of users is doubled, the latency is increased by at most a $\log{n}$ term, where $n$ is the largest number of users in the system during a complete run of the algorithm.

\subsection{Privacy-Preserving Algorithm} \label{sec:multi-dist}
We now adapt Algorithm~\ref{alg:Basic} to the multiple distributor setting. Our goal is to run a protocol jointly among multiple distributors to keep user-bridge assignments hidden from each distributor and from any coalition of up to a $1/3$ 	fraction of them. We assume that a sufficient number of bridges have already registered their email addresses in the system so that in each round the protocol can ask some of them to provide their IP addresses to the system to be distributed by the protocol.

We first construct a \emph{leader-based protocol}, where an honest-but-curious distributor called the \emph{leader} locally runs Algorithm~\ref{alg:Basic} over anonymous bridge addresses. The leader then sends anonymous user-bridge assignments to other distributors who can collectively ``open'' the assignments for the users. 

Next, we construct a fully decentralized protocol, where a group of $m$ distributors collectively compute the bridge distribution functionality while resisting malicious fault from up to a $\lfloor m/3 \rfloor$ fraction of the distributors. Malicious distributors not only may share information with other malicious entities, but also can deviate from our protocol in any arbitrary manner, \eg, by sending invalid messages or remaining silent.

Both of these protocols rely on a secret sharing scheme for the bridges to share their IP addresses with the group of distributors. Before proceeding to our protocols, we briefly describe the secret sharing scheme used in our protocol.

\para{Secret Sharing.} A \emph{secret sharing} protocol allows a party (called \emph{the dealer}) to share a secret among $m$ parties such that any set of $\tau$ or less parties cannot gain any information about the secret, but any set of at least $\tau+1$ parties can reconstruct it. Shamir~\cite{shamir:how} proposed a secret sharing scheme, where the dealer shares a secret $s$ among $m$ parties by choosing a random polynomial $f(x)$ of degree $\tau$ such that ${f(0)=s}$. For all ${j \in [m]}$, the dealer sends $f(j)$ to the $j$-th party. Since at least ${\tau+1}$ points are required to reconstruct $f(x)$, no coalition of $\tau$ or less parties can reconstruct $s$.
The reconstruction algorithm requires a Reed-Solomon decoding algorithm~\cite{Reed-Solomon1960} to correct up to $1/3$ invalid shares sent by dishonest distributors. In our protocol, we use the error correcting algorithm of Berlekamp and Welch~\cite{Berlekamp:Welch:1986}.

\fullpaper{We now briefly describe the reconstruction algorithm. Let $\mathbb{F}_{p}$ denote a finite field of prime order $p$, and $S=\{(x_{1},y_{1})\:|\:x_{j},y_{j}\in\mathbb{F}_{p}\}_{j=1}^{\eta}$ be a set of $\eta$ points, where $\eta-\varepsilon$ of them are on a polynomial $y=P(x)$ of degree $\tau$, and the rest $\varepsilon<(\eta-\tau+1)/2$ points are erroneous. Given the set of points $S$, the goal is to find the polynomial $P(x)$. The algorithm proceeds as follows. Consider two polynomials $E(x)=e_{0}+e_{1}x+...+e_{\varepsilon}x^{\varepsilon}$ of degree $\varepsilon$, and $Q(x)=q_{0}+q_{1}x+...+q_{k}x^{k}$ of degree $k\leq\varepsilon+\tau-1$ such that $y_{i}E(x_{i})=Q(x_{i})$ for all $i\in[\eta]$. This defines a system of $\eta$ linear equations with $\varepsilon+k=\eta$ variables $e_{0},...,e_{\varepsilon},q_{0},...,q_{k}$ that can be solved efficiently using Gaussian elimination technique to get the coefficients of $E(x)$ and $Q(x)$. Finally, calculate $P(x)=Q(x)/E(x)$.}

\subsubsection{Leader-Based Protocol} \label{sec:leader-alg}

Similar to Algorithm~\ref{alg:Basic}, the leader-based protocol also proceeds in rounds. In each round $i$, the leader requests a group of at most $d_i$ bridges to secret-share their IP addresses among all distributors (including the leader) using Shamir's scheme~\cite{shamir:how}. 

Let $(B_1,...B_{d_i})$ denote the sequence of shares the leader receives once the bridges finish the secret sharing protocol. The leader runs Algorithm~\ref{alg:Basic} locally to assign $B_j$'s to the users randomly, for all ${j \in [d_i]}$. Then, the leader broadcasts the pair $(u_k, I_k)$ to all distributors, where $I_k$ is the set of indices of bridges assigned to user $u_k$, for all $k \in [1,...,n]$.

Each distributor then sends its shares of bridge addresses to the appropriate user with respect to the assignment information received from the leader. Finally, each user is able to reconstruct the bridge addresses assigned to him, because at least a 2/3 fraction of the distributors are honest and have correctly sent their shares to the user.
%

\subsubsection{Fully Decentralized Protocol} \label{sec:decentralized-alg}
In each round, Algorithm~\ref{alg:Basic} picks one of the $d_i$ bridges uniformly at random. For each user $u$, if the group of distributors described in the leader-based protocol can collectively agree on a random number ${k \in [d_i]}$, then each of them can individually run Algorithm~\ref{alg:Basic} to assign $u$ to the bridge corresponding to $k$. 

Assuming each distributor holds a share of every bridge address (similar to the leader-based protocol), he can then send his own share to $u$, allowing the user to privately reconstruct the bridge address even if at most a $1/3$ fraction of the shares are invalid.

\para{Distributed Random Generation.} 
We uses a well known commit-and-reveal technique for distributed random generation~\cite{cryptoeprint:2015:366, Tor:DRG:Proposal:2015}. This protocol has at most four rounds of communication and can run efficiently among a small number of distributors to generate unbiased random numbers even if up to a $m/3$ distributors play maliciously.

Let $D_1,...,D_m$ denote the distributors. The DRG protocol starts by each distributor $D_j$ choosing a uniform random number $r_j$ locally, and then publishing a commitment $c_j$ to it. Once all commitments have been received, the distributors reveal their random numbers and verify the commitments. Finally, each distributor computes $r = \sum_{k=1}^m{r_j}$. In \bricks/, we use the simple commitment mechanism described in~\cite{Tor:DRG:Proposal:2015}.

Although this protocol can generate unbiased random numbers, it is vulnerable to \emph{equivocation attacks}\footnote{In \cite{Tor:DRG:Proposal:2015}, these attack are referred to as \emph{partition attacks}.}: A dishonest distributor can send different random values to different distributors while the corresponding commitments are correct and check out. We prevent this by asking the distributors to participate in a Byzantine agreement protocol to agree on the random value $r$ at the end of the protocol. In \bricks/, we use the small scale Byzantine agreement protocol of Castro and Liskov~\cite{CL1} known as BFT. This protocol is efficient for small numbers of participants (about 10) and can tolerate faults from up to a third fraction of the participants. If any of the distributors equivocates, then the BFT protocol fails, and the DRG protocol restarts. 

Another vulnerability is \emph{denial of service attacks}: A dishonest distributor can bias the random number by choosing whether or not to open his commitment; using this he can repeatedly cause the protocol to abort and restart until the resulting value is what he desires. In both equivocation and denial of service attacks, the cheating distributors can be detected easily using administrative mechanisms, especially since the number of distributors is small in our model. Therefore, we believe these attacks do not offer much gain to the adversary.
We finish this section by analyzing the communication complexity of our protocol.
\begin{lemma}[\sans{Communication Complexity}]
	In each round of \bricks/, each user sends/receives at most $m$ messages and each distributor sends/receives ${O(m^2 + n)}$ messages. Each message has length $O(\log{n})$ bits.
\end{lemma}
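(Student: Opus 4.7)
The plan is to bound the three quantities by tracking every communication pattern induced by one full round of the fully decentralized protocol: (a) the secret-sharing of bridge addresses by the $d_i$ bridges, (b) the distributed random generation together with the BFT agreement that protects it from equivocation, and (c) the final dissemination of shares from distributors to users. Since these components are independent and each contributes to the totals additively, the bound reduces to summing their individual contributions.

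First I would handle the user side. In each round, a user $u$ is assigned a single bridge (per parallel instance), whose address has been Shamir-shared among the $m$ distributors. To reconstruct, $u$ must obtain one share from each distributor, and by Reed-Solomon/Berlekamp-Welch decoding this is the only communication $u$ needs; thus $u$ receives exactly $m$ messages and sends none. Each such share is a point of a polynomial over a field whose size is polynomial in $n$ (large enough to encode an IP address, commitment tag, and share index), so each share fits in $O(\log n)$ bits.

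Next I would account for a distributor's messages. The sources are: (i) receiving shares from the $d_i \le O(n)$ bridges that secret-share their addresses in this round, giving $O(n)$ inbound messages of $O(\log n)$ bits each; (ii) participating in the DRG, in which each distributor exchanges a commitment and a reveal with every other distributor, contributing $O(m)$ messages; (iii) running the Castro–Liskov BFT instance that certifies the DRG output against equivocation, whose message complexity is $O(m^2)$; and (iv) forwarding one share of the assigned bridge to each of the $n$ users, contributing $O(n)$ outbound messages. Crucially, because the DRG output is used as a common seed that every distributor feeds into Algorithm~\ref{alg:Basic} locally, only a constant number of DRG (and hence BFT) invocations are needed per round — not one per user — which is what keeps the total at $O(m^2 + n)$ rather than $\Omega(n m^2)$. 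Summing these contributions yields the claimed $O(m^2 + n)$ bound.

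Finally I would verify the message-length bound. Bridge addresses are $O(\log n)$ bits; Shamir shares, committed random values, commitment openings, BFT vote tags, assignment indices $k \in [d_i] \subseteq [n]$, and user identifiers $u_j \in [n]$ are all expressible in $O(\log n)$ bits. Thus every message produced by the protocol has length $O(\log n)$, completing the lemma. The main obstacle I anticipate is not any single estimate but rather the bookkeeping: making clear that the DRG/BFT cost is $O(m^2)$ per round (amortized over all $n$ users via seed-based local evaluation), so that it does not multiply the per-user cost and blow the bound past $O(m^2+n)$.
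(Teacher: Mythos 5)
Your proposal is correct and follows essentially the same route as the paper: a direct enumeration of the per-round message flows (bridge secret-sharing, DRG plus BFT, and share dissemination to users), summed to give $m$ messages per user and $O(m^2+n)$ per distributor. The only bookkeeping difference is that you bound the bridge-registration traffic by $d_i = O(n)$ per round while the paper amortizes the $O(t\log n)$ total over all rounds, and you make explicit the (implicit in the paper) assumption that one DRG/BFT invocation per round serves as a common seed for all $n$ assignments; both choices yield the same bounds.
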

\begin{proof}
	In the single distributor model, the distributor sends one message to each client per round. Each message contains a list of $3\log{n}$ bridge addresses, therefore it has length $O(\log{n})$ bits.
	
	In the multiple distributors models (leader-based and fully-decentralized), each distributor sends one message to each client per round. Therefore, each client receives $m$ messages in each round. Since each message contains a list of $3\log{n}$ secret-shared values (each corresponding to a bridge address), each message is of size $O(\log{n})$ bits.
	
	In the multiple distributor models, each distributor receives a secret-shared value from each bridge. Since the total number of bridges used by the protocol is $O(t\log{n})$, each distributor receives $O(t\log{n})$ field elements in all rounds from the bridges.	
	
	In the leader-based model, the leader sends to every distributor one message each containing a list of $3\log{n}$ user-bridge information. Therefore, the leader sends a total of $m$ messages each of size $O(\log{n})$ bits in each round. Each distributor in this model sends to each client one message each containing $3\log{n}$ secret-shared values, thus each distributor sends/receives \[O\left(m + n + \frac{t\log{n}}{\log{t}}\right) = O(m + n)\] messages of size $O(\log{n})$ bits in each round.
	
	In the fully decentralized model, each distributor participates in a run of the random generation protocol per round. This protocol consists of one secret sharing round transmitting $m$ field elements per distributor, and one run of the BFT protocol, which sends $O(m^2)$ field elements per distributor. Finally, each distributor sends to each client one message containing a list of $3\log{n}$ secret-shared values. Thus, each distributor sends/receives \[O\left(m^2 + n + \frac{t\log{n}}{\log{t}}\right) = O(m^2 + n)\] messages each of size $O(\log{n})$ bits in each round.
	
\end{proof}

\section{Evaluation} \label{sec:simulations}
We implemented a proof-of-concept prototype and tested it in a simulated environment under various adversarial behavior. The prototype is written in C\# using .NET Framework 4.5. We ran the simulations on an Intel Core i5-4250U 1.3GHz machine with 4GB of RAM running Windows 10 Pro. We set the parameters of \bricks/ in such a way that we ensure it fails with probability at most $10^{-4}$. 

\begin{figure*}
	\hspace{-0.8em}\includegraphics[width=0.34\textwidth]{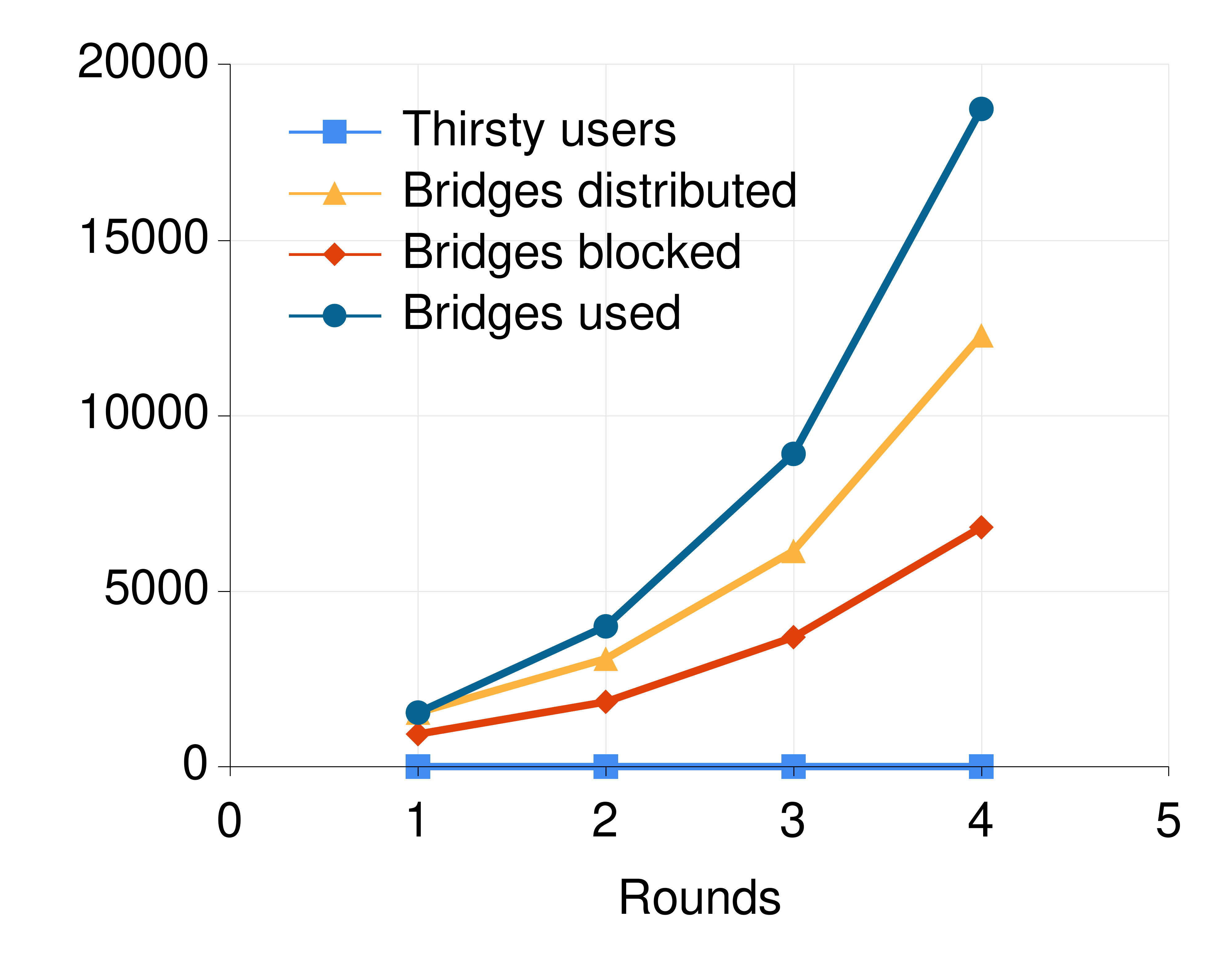}
	\hspace{-0.5em}\includegraphics[width=0.34\textwidth]{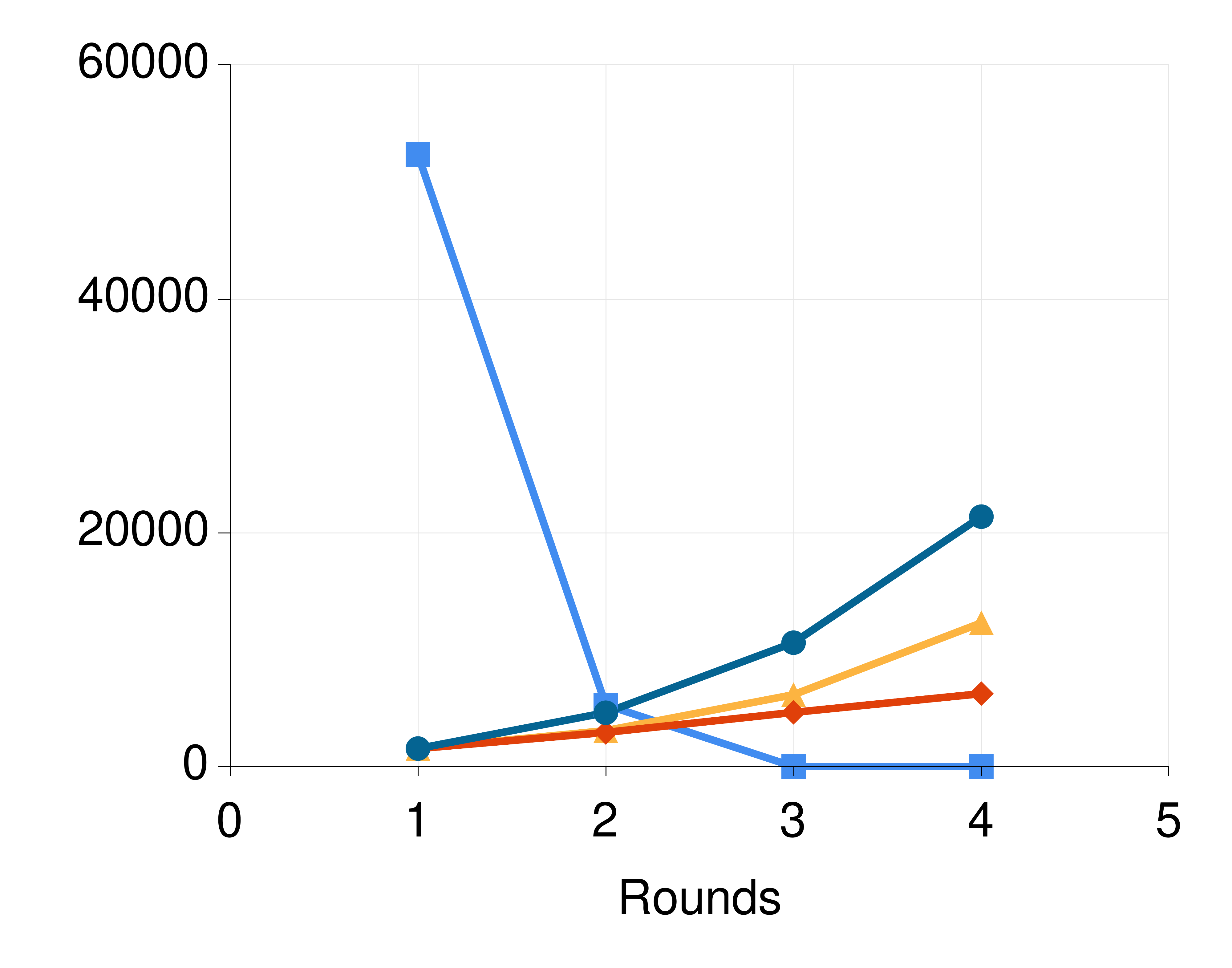}
	\hspace{-0.5em}\includegraphics[width=0.34\textwidth]{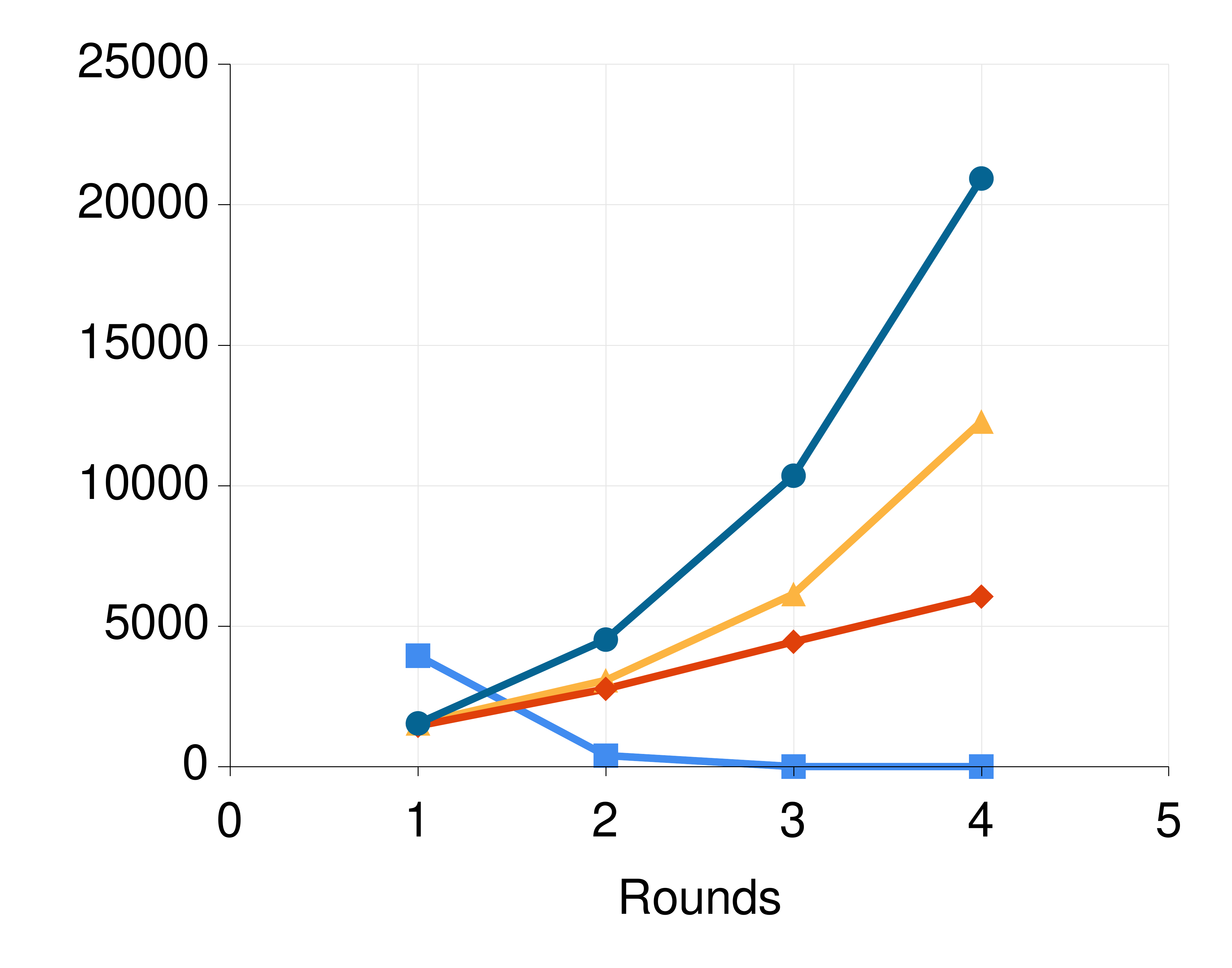}
	\caption{Simulation results for ${n=65536}$ and ${t=180}$ with prudent (left), aggressive (middle), and stochastic (right) adversary.}
	\label{fig:plot1} 
\end{figure*}

\begin{figure*}[t]
	\hspace{-0.4em}
	\includegraphics[width=0.32\textwidth]{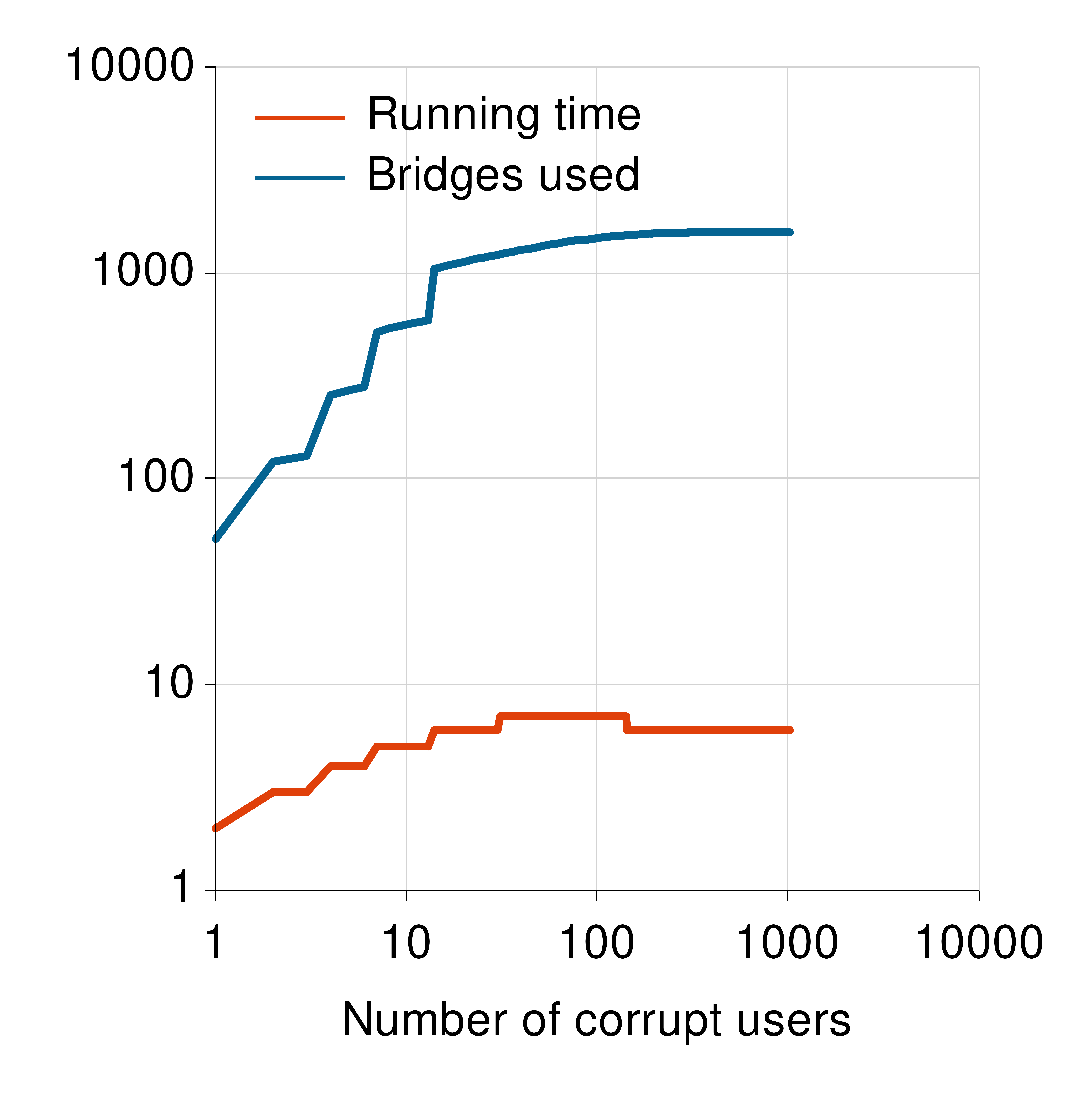}	
	\hspace{0.3em}
	\includegraphics[width=0.32\textwidth]{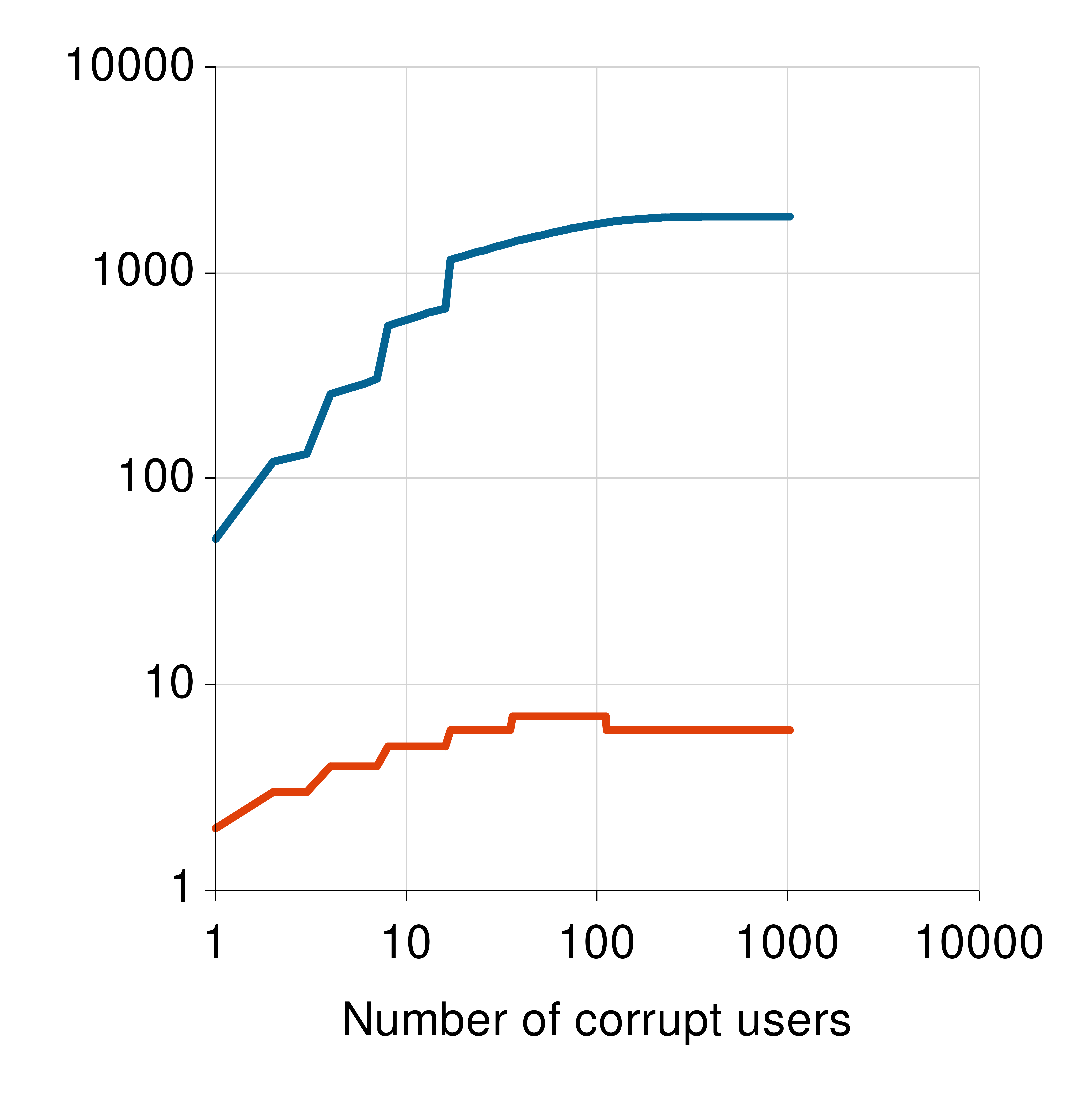}	
	\hspace{0.3em}
	\includegraphics[width=0.32\textwidth]{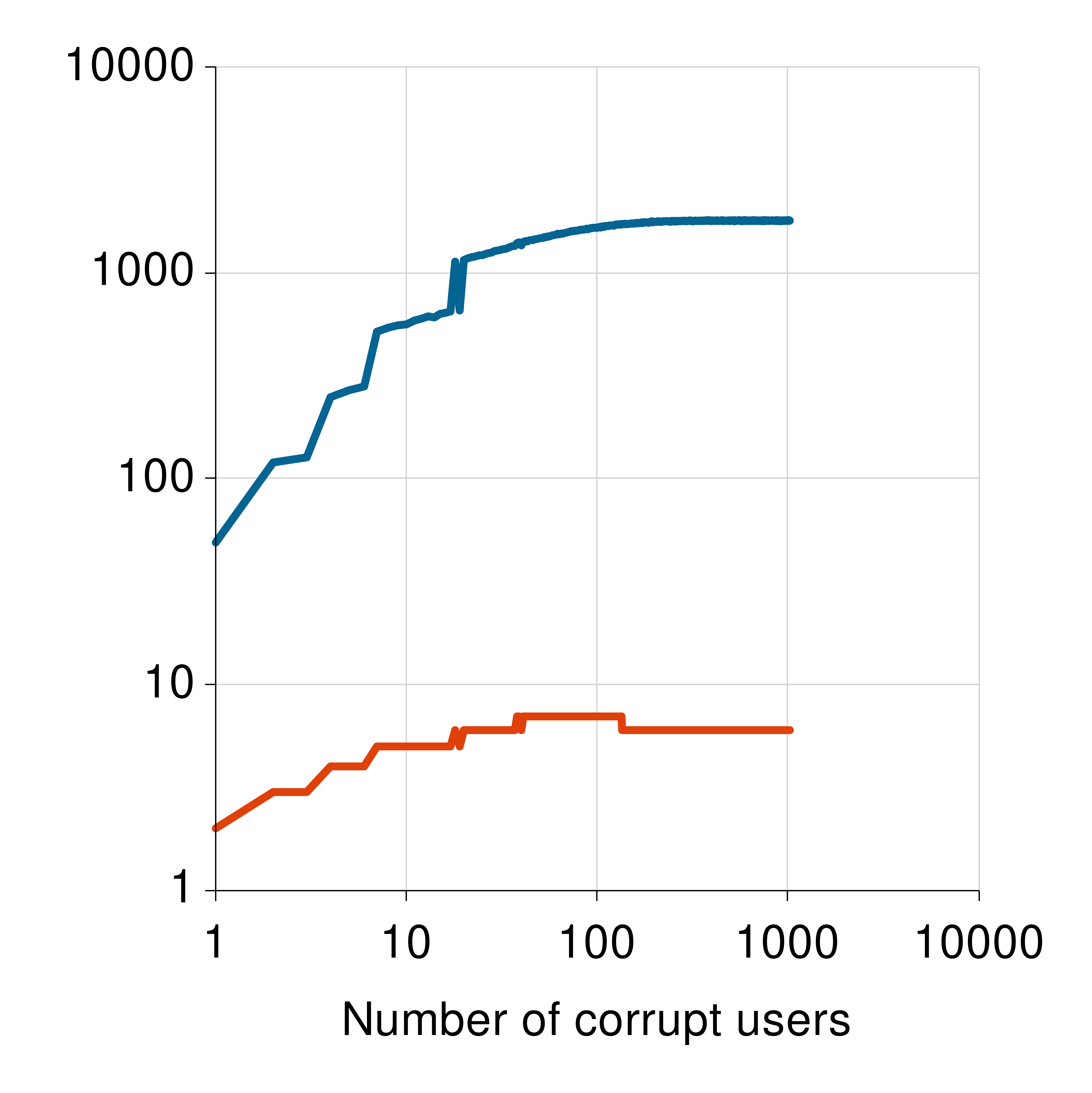}
	\caption{Simulation results for ${n=1024}$ and variable number of corrupt users with prudent (left), aggressive (middle), and stochastic (right) adversary.}
	\label{fig:plot2} 
\end{figure*}


We consider three blocking strategies for the adversary: \emph{prudent}, \emph{aggressive}, and \emph{stochastic}. A prudent adversary blocks the minimum number of bridges in each round such that the algorithm is forced to go to the next round. An aggressive adversary blocks immediately all of the bridges he learns from the corrupt users. Finally, a stochastic adversary blocks each bridge he receives with some fixed probability.

To evaluate the performance of \bricks/, we calculate five measures of performance in two experiments. These measures are:

\begin{enumerate}[itemsep=0.4em, topsep=0.55em]
	\item \sans{Thirsty Users:} Number of users who do not have any unblocked bridge in the current round. \label{measure:thirsty}
	\item \sans{Bridges Distributed:} Number of bridges distributed in the current round ($d_i$). \label{measure:d_i}
	\item \sans{Bridges Blocked:} Number of bridges blocked in the current round ($b_i$). \label{measure:b_i}
	\item \sans{Bridges Used:} Total number of unique bridges distributed by the algorithm until this round ($M_i$). \label{measure:M_i}
	\item \sans{Latency:} Number of round until all users receive at least one unblocked bridge. \label{measure:latency}
\end{enumerate}

In the first experiment (shown in Figure~\ref{fig:plot1}), we run the algorithm in the basic algorithm for ${n=65536}$ and ${t=180}$, and calculate Measures~\ref{measure:thirsty}--\ref{measure:M_i} at the end of each round after running the algorithm over 10 samples for a fixed set of parameters. The experiment was run with the three different adversarial strategies. For the stochastic blocking, the adversary blocks each bridge with probability $0.95$. 
In the second experiment (shown in Figure~\ref{fig:plot2}), we run the algorithm using a single distributor for ${n=1024}$ and calculate Measures~\ref{measure:M_i} and \ref{measure:latency} by varying $t$ between $0$ and $1023$.

Our results indicate that \bricks/ incurs a very small cost when there is small or no corruption in the network. Moreover, the algorithm scales well with the number of corruptions and can quickly adapt to the adversary's behavior. These all support our claim that \bricks/ can be used for practical bridge distribution with provable access guarantee for all users.

\section{Conclusion} \label{sec:conclusion}
We described \bricks/, a bridge distribution system that allows all honest users to connect to Tor in the presence of an adversary corrupting an unknown number of users. Our algorithm can adaptively increase the number of bridges according to the behavior of the adversary and use near-optimal number bridges. We also modified our algorithm slightly to handle user churn (join/leave) by adding small (constant) amortized latency.

We also described a protocol for privacy-preserving bridge distribution by running the distribution algorithm obliviously among a group of distributors. We showed that the resulting protocol not only can protect the privacy of user-bridge assignments from any coalition of up to a $1/3$ fraction of the distributors, but also can tolerate malicious attacks from a $1/3$ fraction of the distributors. We finally evaluated a prototype of our protocol in different simulated scenarios to show that \bricks/ can be used efficiently in practice.

Although \bricks/ represents a step towards robust and privacy-preserving bridge distribution, many challenges remain for future work. For example, the current algorithm uses a relatively large number of bridges when the number of corrupt users is large. Is it possible to make the bridge cost sublinear in $t$ with practical constant terms? 
An interesting direction for answering this question to use inexpensive honeypot bridges for detecting and blacklisting corrupt users. This, however, requires a mechanism such as CAPTCHA for preventing the adversary from distinguishing real bridges from the fake ones. Moreover, a colluding adversary may be able to compare bridges assigned to its corrupt users to detect honeypots.

To better explore the possibility of achieving a sublinear bridge cost, one may consider finding lower bounds for different scenarios. For example, when each user is assigned at least one bridge, it seems impossible to achieve a sublinear bridge cost unless some of the bridges are fake, or we only distribute real bridges in random-chosen rounds. What is the lower bound for the number of rounds in these scenarios?

Another interesting open problem is to examine if our current notion of robustness is overkill for practice. For example, is it possible to significantly reduce our costs by guaranteeing access for all but a constant number of users? 

\bibliographystyle{plain}
\bibliography{security}


\end{document}